\theoremstyle{plain} 
\newtheorem{theorem}{Theorem}[section]
\newtheorem{lemma}[theorem]{Lemma}
\newtheorem{definition}[theorem]{Definition}
\newcommand{\bind}[1]{\downarrow_{#1}}
\newcommand{\val}{\nu}
\newcommand{\prvar}{\mathit{pr}}
\newcommand{\prvars}{\mathit{PR}}
\newcommand{\prvarl}{\mathit{pa}}
\newcommand{\ponet}{\mathit{pn?}}
\newcommand{\pol}{\mathit{po}}
\newcommand{\nel}{\mathit{ne}}
\newcommand{\sk}{\mathit{skip}}
\newcommand{\eval}{\mathit{eval}}
\newcommand{\fv}{\mathit{fv}}
\newcommand{\var}{\mathit{var}}
\newcommand{\fpt}{\text{\sc FPT}}
\newcommand{\wsat}{\text{\sc W[SAT]}}
\newcommand{\wop}{\text{\sc W[P]}}
\newcommand{\awsat}{\text{\sc AW[SAT]}}
\newcommand{\awp}{\text{\sc AW[P]}}
\newcommand{\xnl}{\text{\sc XNL}}
\newcommand{\aut}[1]{\mathcal{A}(#1)}
\newcommand{\xra}{\xrightarrow}
\newcommand{\mismatch}{\mathit{Mismatch}}
\newcommand{\range}{\mathit{Range}}
\newcommand{\Repeat}[2]{\noindent{\textsc{\textbf{#1~\ref{#2}.\
      }}}}
\title{Nesting Depth of Operators in Graph Database Queries: Expressiveness
Vs.~Evaluation Complexity}
\author{M. Praveen and B. Srivathsan\thanks{Both the authors are partially funded by
a grant from Infosys Foundation.}\\Chennai Mathematical Institute,
India}
\date{}
\begin{document}

\maketitle

\begin{abstract}
    Designing query languages for graph structured data is an active
    field of research, where expressiveness and efficient algorithms
    for query evaluation are conflicting goals. To better handle
    dynamically changing data, recent work has been done on designing
    query languages that can compare values stored in the graph
    database, without hard coding the values in the query. The main
    idea is to allow variables in the query and bind the variables to
    values when evaluating the query. For query languages that bind
    variables only once, query evaluation is usually \NP{}-complete.
    There are query languages that allow binding inside the scope of
    Kleene star operators, which can themselves be in the scope of
    bindings and so on. Uncontrolled nesting of binding and iteration
    within one another results in query evaluation being
    \PSPACE{}-complete.

    We define a way to syntactically control the nesting depth of
    iterated bindings, and study how this affects expressiveness and
    efficiency of query evaluation. The result is an infinite,
    syntactically defined hierarchy of expressions. We prove that the
    corresponding language hierarchy is strict. Given an expression in
    the hierarchy, we prove that it is undecidable to check if there is a
    language equivalent expression at lower levels. We prove that
    evaluating a query based on an expression at level $i$ can be done
    in $\S_{i}$ in the polynomial time hierarchy. Satisfiability of
    quantified Boolean formulas can be reduced to query evaluation; we study the
    relationship between alternations in Boolean quantifiers and the
    depth of nesting of iterated bindings.
\end{abstract}

\usetikzlibrary{fit}
\usetikzlibrary{shapes.geometric}

\newlength{\ml}
\setlength{\ml}{1cm}

\tikzstyle{state} = [circle, draw = black, minimum width = 0.2\ml,
inner sep=0.03\ml, fill=black]

\section{Introduction}
Graph structures representing data have found many applications like
semantic web \cite{GHM2011}, social networks \cite{RS2009} and
biological networks \cite{L2005}. Theoretical models of such data
typically have a graph with nodes representing entities and edges
representing relations among them. One reason for the popularity of
these models is their flexibility in handling semi-structured
data. While traditional relational databases impose rigid structures
on the relations between data elements, graph databases are better
equipped to handle data in which relations are not precisely known
and/or developing dynamically.

A fundamental query language for such models is Regular Path Queries
(RPQs), which is now part of the W3C recommendation \cite{SPARQL2013}.
An RPQ consists of a regular expression over the finite alphabet
labeling the edges of the graph.
Suppose a communication network is modeled by a graph, where nodes
represent servers and edges labeled $\ell$ represent links between
them. Evaluating the RPQ $\ell^{*}$ on this graph results in the set
of pairs of nodes between which there exists a route. Suppose each
link has a priority and we need pairs of connected nodes where all
intermediate links have the same priority. We can hard code the set of
priorities in the query. If the set of priorities is not static, a
querying mechanism which avoids hard coding is better. Every edge can
be labeled by a supplementary data value (priority of the link, in
this example) and we want query languages that can compare data values
without hard coding them in the syntax. Nodes can also carry data
values.  In generic frameworks, there is no a priori bound on the
number of possible data values and they are considered to be elements
of an infinite domain. Graph databases with data values are often
called data graphs in theory and property graphs in practice.

One way to design querying languages for data graphs is to extend RPQs
using frameworks that handle words on infinite alphabets
\cite{LV2012,LTV2013,KRV2014,Vrgoc2015}. Expressiveness and efficient
algorithms for query evaluation are conflicting goals for designing
such languages. We study a feature common to many of these languages,
and quantify how it affects the trade-off between expressiveness  and complexity of query
evaluation. Variable finite automata \cite{GKS2010} and parameterized
regular expressions \cite{BLR2013} are conservative extensions of
classical automata and regular expressions. They have variables, which
can be bound to letters of the alphabet at the beginning of query
evaluation. The query evaluation problem is \NP{}-complete for these
languages. Regular expressions with binding (REWBs) \cite{LTV2013} is
an extended formalism where binding of variables to values
can happen inside a Kleene star, which can itself be in the scope of another
binding operator and so on. Allowing binding and iteration to occur
inside each other's scope freely results in the query evaluation
problem being \PSPACE{}-complete. Here we study how the expressiveness
and complexity of query evaluation vary when we syntactically control
the depth of nesting of iterated bindings.

\emph{Contributions:}
\begin{enumerate}
    \item We syntactically classify REWBs according to the
        depth of nesting of iterated bindings.
    \item The resulting hierarchy of data languages is strict, and so
        is the expressiveness of queries.
    \item It is undecidable to check if a given REWB has a language
        equivalent one at lower levels.
    \item An REWB query in level $i$ can be evaluated in $\S_{i}$ in the
        polynomial time hierarchy.
    \item For lower bounds, we consider quantified Boolean formulas with some
        restrictions on quantifications and reduce their
        satisfiability to query evaluation, with some restrictions on
        the queries.
\end{enumerate}
For proving strictness of the language hierarchy, we build upon ideas
from the classic star height hierarchy \cite{Eggan1963}.
Universality of REWBs is known to be undecidable
\cite{NSV2004,KRV2014}.
We combine techniques from this proof with tools developed for the
language hierarchy to prove the third result above. The $\S_{i}$ upper
bound for query evaluation involves complexity theoretic arguments
based on the same tools. In the reductions from satisfiability of
quantified Boolean formulas to the query evaluation problem, the
relation between the number of alternations (in the Boolean
quantifiers) and the depth of nesting (of iterated bindings in REWBs)
is not straight forward. We examine this relation closely in the
framework of parameterized complexity theory, which is suitable for
studying the effect of varying the structure of input instances on the
complexity.

\emph{Related work:} The quest for efficient evaluation algorithms
and expressive languages to query graph databases, including those
with data values, is an active area of research; \cite{Barcelo2013} is
a recent comprehensive survey. Numerous formalisms based on logics and
automata exist for handling languages over infinite alphabets
\cite{Segoufin2006}. In \cite{LV2012}, the suitability of these
formalisms as query languages has been studied, zeroing in on register
automata mainly for reasons of efficient evaluation. The same paper
introduced regular expressions with memory and proved that they are
equivalent to register automata. REWBs \cite{LTV2013} have slightly
less expressive power but have better scoping structure for the
binding operator. Properties of these expressions have been further
studied in \cite{KRV2014}. In \cite{LMV2013}, XPath has been adapted
to query data graphs. Pebble automata have been adapted to work with
infinite alphabets in \cite{NSV2004}. A strict language hierarchy
based on the number of pebbles allowed in pebble automata has been
developed in \cite{Tan2013}. Many questions about comparative
expressiveness of register and pebble automata are open
\cite{NSV2004}. Fixed-point logics can be used to define languages over
infinite alphabets \cite{CM2014}. These logics can use the class
successor relation, which relates two positions with the same data
value if no intermediate position carries the same value.
Expressiveness of these logics increase \cite{CM2015, CM2015b}, when
the number of alternations between standard successor relation and class
successor relation increase.

\section{Preliminaries}
\label{sec:preliminaries}

\subsection{Data Languages and Querying Data Graphs}
We follow the notation of \cite{LTV2013}. Let $\S$ be a
finite alphabet and $\Dd$ a countably infinite set. The
elements of $\Dd$ are called \emph{data values}. A \emph{data word} is a
finite string over the alphabet $\Sigma \times \Dd$. We will write a
data word as $\binom{a_1}{d_1}\binom{a_2}{d_2}\dots \binom{a_n}{d_n}$,
where each $a_i \in \S$ and $d_i \in \Dd$. A set of data words is
called a \emph{data language}.

An extension of standard regular expressions, called \emph{regular
expressions with binding (REWB)}, has been defined in \cite{LTV2013}.
Here, data values are compared using variables.
For a set $\{x_1, x_2, \dots, x_k \}$ of variables, the set of
conditions $\Cc_k$ is the set of Boolean combinations of $x_{i}^{=}$
and $x_{i}^{\ne}$ for $i \in \set{1, \ldots, k}$. A data value $d \in
\Dd$ and a partial valuation $\val: \set{x_{1}, \ldots, x_{k}} \to
\Dd$ satisfies the condition $x_{i}^{=}$ (written as $d, \val \models
x_{i}^{=}$) if $\val(x_{i}) = d$. The satisfaction for other Boolean
operators are standard.

\begin{definition}[Regular expressions with binding (REWB)
    \cite{LTV2013}]
  Let $\S$ be a finite alphabet and $\{x_1,\dots, x_k\}$ a set of
  variables. Regular expressions with binding over $\S[x_1,\dots,x_k]$
  are defined inductively as: $~~r ~:=~ \e~|~a~|~a[c]~|~r + r~|~r \cdot
  r~|~r^*~|~a \bind{x}(r)~~$ 
  where $a\in \S$ is a letter in the alphabet, $c \in \Cc_k$ is a
  condition on the variables and $x \in \set{x_{1}, \ldots,
  x_{k}}$ is a variable.
\end{definition}

We call $\bind{x}$ the \emph{binding operator}. In the expression $a
\bind{x} (r)$, the expression $r$ is said to be the \emph{scope} of
the binding $\bind{x}$. A variable $x$ in an expression is
\emph{bound} if it occurs in the scope of a binding
$\bind{x}$. Otherwise it is \emph{free}.
We write $\fv(r)$ to denote the
set of free variables in $r$ and $r(\bar{x})$ to denote that
$\bar{x}$ is the sequence of all free variables.
The semantics of an REWB $r(\bar{x})$ over the variables $\{x_1,
\dots, x_k\}$ is defined with respect to a partial valuation
$\val: \{x_1, \dots, x_k\} \to \Dd$ of the variables. A valuation
$\val$ is \emph{compatible} with $r(\bar{x})$ if $\val(\bar{x})$ is
defined.


\begin{definition}[Semantics of REWB]
  Let $r(\bar{x})$ be an REWB over $\Sigma[x_1, \dots, x_k]$ and let
  $\val: \{x_1, \dots, x_k\} \to \Dd$ be a valuation of variables
  compatible with $r(\bar{x})$. The language of data words
  $L(r, \val)$ defined by $r(\bar{x})$ with respect to $\val$ is given
  as follows:

  \hfill\break
      \begin{tabular}[h]{rl|rl|rl}
          \toprule
          $r$ & $L(r,\val)$ & $r$ & $L(r,\val)$ & $r$ & $L(r.\val)$\\
          \midrule
          $\e$ & $\set{\e}$ & $a$ & $\{ \binom{a}{d}~|~ d \in \Dd \}$
          & $a[c]$ & $\{ \binom{a}{d}~|~ d, \val \models c \}$\\
          $r_{1} + r_{2}$ & $L(r_1, \val) \cup L(r_2,\val)$ &
          $r_{1} \cdot r_{2}$ & $L(r_1, \val) \cdot L(r_2, \val)$ &
          $r_{1}^{*}$ & $(L(r_{1}, \val))^{*}$\\
          $a \bind{x_i}(r_1)$ & $\bigcup_{d \in
      \Dd} \{\binom{a}{d}\} \cdot L(r_1, \val[x_i \to d])$ &
      \hphantom{a}&\\
      \bottomrule
      \end{tabular}\hfill\break
  where $\val[x_i \to d]$ denotes the valuation which is the
  same as $\val$ except for $x_i$ which is mapped to $d$. An REWB $r$
  defines the data language $L(r) = \bigcup_{\val \text{ compatible
  with } r} L(r, \val)$.
\end{definition}

For example, the REWB $a\bind{x} (b[x^=]^*)$ defines the set of data
words of the form $a b^*$ with all positions having the
same data value. The REWB $(a \bind{x} (b [x^=]))^*$ defines the set
of data words of the form
$\binom{a}{d_1}\binom{b}{d_1}\binom{a}{d_2}\binom{b}{d_2} \cdots
\binom{a}{d_n}\binom{b}{d_n}$.

%
\begin{definition}[Data graphs]
  A \emph{data graph} $G$ over a finite alphabet $\Sigma$ and an
  infinite set of data values $\Dd$ is a pair $(V, E)$ where $V$ is a
  finite set of vertices, and $E \incl V \times \Sigma \times \Dd \times V$
  is a set of edges which carry labels from $\Sigma \times \Dd$.
\end{definition}

We do not have data values on vertices, but
they can be introduced without affecting the results. A \emph{regular data path query} is of the form $Q=x \xra{r} y$ where
$r$ is an REWB. Evaluating $Q$ on a data graph $G$ results in the set $Q(G)$
of pairs of nodes $\struct{u,v}$ such that there exists a data path from
$u$ to $v$ and the sequence of labels along the data path forms a data word in
$L(r)$. Evaluating a regular data path query on a data
graph is known to be \PSPACE{}-complete in general and \NLOGSPACE{}-complete when
the query is of constant size \cite{LTV2013}.
We sometimes identify the query $Q$ with the expression
$r$ and write $r(G)$ for $Q(G)$. A query $r_{1}$ is said to be
contained in another query $r_{2}$ if for every data graph $G$,
$r_{1}(G) \subseteq r_{2}(G)$. It is known from \cite[Proposition
3.5]{KRV2014} that a query $r_{1}$ is contained in the query $r_{2}$
iff $L(r_{1}) \subseteq L(r_{2})$. Hence, if a class $E_{2}$ of REWBs
is more expressive than the class $E_{1}$ in terms of defining data
languages, $E_{2}$ can also express more queries
than $E_{1}$.
 
\subsection{Parameterized Complexity}
The size of queries are typically small compared to the size of
databases. To analyze the efficiency of query evaluation algorithms,
the size of the input can be naturally split into the size of the
query and the size of the database. Parameterized complexity theory is
a formal framework for dealing with such problems. An instance of a
parameterized problem is a pair $(x,k)$, where $x$ is an encoding of
the input structure on which the problem has to be solved (e.g., a
data graph and a query), and $k$ is a parameter associated with the input
(e.g., the size of the query). A parameterized problem is said to be
in the parameterized complexity class Fixed Parameter Tractable
(\fpt{}) if there is a computable function $f:\Nat \to \Nat$, a
constant $c \in \Nat$ and an algorithm to solve the problem in time
$f(k)|x|^{c}$.

We will see later that the query evaluation problem is unlikely to be
in \fpt{}, when parameterized by the size of the regular data path
query. There are many parameterized complexity classes that are
unlikely to be in \fpt{}, like \wsat{}, \wop{},
\awsat{} and \awp{}. To place parameterized problems in these classes,
we use \fpt{}-reductions.
\begin{definition}[\fpt{} reductions]
    \label{def:fptReduction}
    A \fpt{} reduction from a parameterized problem $Q$ to another
    parameterized problem $Q'$ is a mapping $R$ such that:
    \begin{enumerate}
        \item For all instances $(x,k)$ of parameterized problems,
            $(x,k) \in Q$ iff $R(x,k) \in Q'$.
        \item There exists a computable function $g:\Nat \to \Nat$
            such that for all $(x,k)$, say with $R(x,k) = (x',k')$, we
            have $k' \le g(k)$.
        \item There exist a computable function $f:\Nat \to \Nat$ and
            a constant $c \in \Nat$ such that $R$ is computable in
            time $f(k)|x|^{c}$.
    \end{enumerate}
\end{definition}


\section{Nesting Depth of Iterated Bindings and Expressive Power}
\label{sec:nest-depth-iter-short}


A binding $\bind{x}$ along with a condition $[x^=]$ or $[x^\neq]$ is
used to constrain the possible data values that can occur at certain
positions in a data word. A binding inside a star --- an
\emph{iterated binding} --- imposes the constraint arbitrarily many
times. For instance, the expression $r_1 := (a_1 \bind{x_1} (b_1
[x_1^=]))^*$ defines data words in $(a_1b_1)^*$ where every $a_1$ has
the same data value as the next $b_{1}$. We now define a syntactic
mechanism for controlling the nesting depth of iterated bindings. The
restrictions result in an infinite hierarchy of expressions. The
expressions at level $i$ are generated by $F_{i}$ in the grammar
below, defined by induction on $i$.
\begin{align*}
  F_{0} & ::= \e ~|~ a ~|~ a[c] ~|~ F_{0} + F_{0} ~|~ F_{0} \cdot
  F_{0} ~|~ F_{0}^{*}\\
  E_{i} & ::= F_{i-1} ~|~ E_{i} + E_{i} ~|~ E_{i} \cdot E_{i} ~|~ a
  \downarrow_{x_{j}} (E_{i})\\
  F_{i} & ::= E_{i} ~|~ F_{i} + F_{i} ~|~ F_{i} \cdot F_{i} ~|~
  F_{i}^{*}
\end{align*}
where $i \ge 1$, $a \in \S$, $c$ is a condition in $\mathcal{C}_{k}$
and $x_{j} \in \set{x_{1}, \ldots, x_{k}}$. Intuitively, $E_{i}$ can
add bindings over iterations (occurring in $F_{i-1}$) and $F_{i}$ can
add iterations over bindings (occurring in $E_{i}$). The nesting depth
of iterated bindings in an expression in $F_{i}$ is therefore
$i$. The union of all expressions at all levels equals the set of
REWBs. In this paper, we use subscripts to denote the levels of
expressions and superscripts to denote different expressions in a
level: so $e_{5}^{1}$ is some expression in $E_{5}$, $f_{3}^{2}$ is
some expression in $F_{3}$.

We now give a sequence of expressions $\{r_i\}_{i \ge 1}$ such that
each $r_i$ is in $F_{i}$ but no language equivalent expression exists
in $F_{i-1}$. For technical convenience, we use an unbounded number of
letters from the finite alphabet and an unbounded set of
variables. The results can be obtained with a constant number of
letters and variables.
\begin{definition}
  \label{def:strict-expressions}
  Let $\{a_1, b_1, a_2, b_2, \dots \}$ be an alphabet and $\{x_1, x_2,
  \dots\}$ a set of variables. We define $r_1$ to be $(a_1 \bind{x_1} (b_1
  [x_1^=]))^*$. For $i \ge 2$, define $r_i := (a_i \bind{x_i} (r_{i-1}
  b_i[x_i^=]))^*$.
\end{definition}

From the syntax, it can be seen that each $r_i$ is in $F_i$. To show
that $L(r_i)$ cannot be defined by any expression in $F_{i-1}$, we
will use an ``automaton view'' of the expression, as this makes
pigeon-hole arguments simpler. No automata characterizations are known
for REWBs in general; the restrictions on the binding and star
operators in the expressions of a given level help us build specific
automata in stages.

Standard finite state automata can be converted to regular
expressions by considering generalized non-deterministic finite
automata, where transitions are labeled with regular expressions
instead of a single letter (see e.g., \cite[Lemma 1.32]{SIP1997}). The
language of an expression $f_{i}^{1}$ can be accepted by such an
automaton, where transitions are labeled with expressions in $E_{i}$.
We will denote this automaton by $\aut{f_i^1}$. Similarly, the
language of an expression $e_{i}^{1}$ can be accepted by an automaton
whose transitions are labeled with expressions in $F_{i-1}$ or with $a
\bind{x}$. We will denote this automaton by $\aut{e_{i}^{1}}$.  There
are no cycles in $\aut{e_{i}^{1}}$, since $e_{i}^{1}$ can not use the
Kleene $*$ operator except inside expressions in $F_{i-1}$. The runs
of $\aut{e_{i}^{1}}$ are sequences of pairs of a state and a valuation
for variables. The valuations are updated after every transition with
a label of the form $a\bind{x}$. Formal semantics are given in
Appendix~\ref{sec:append-sect-hierarchy}, which also contains all the proofs in detail.

We will prove that $L(r_i)$ cannot be defined by any expression in
$E_{i}$ (and hence not by any expression in $F_{i-1}$). We first
define the following sequence of words, which will be used in the
proof. Let $\{d[j_{1}, j_{2}] \in \Dd \mid j_{1}, j_{2} \in \Nat \}$
be a set of data values such that $d[j_{1},j_{2}] \ne
d[j_{1}',j_{2}']$ if $\di{j_{1},j_{2}} \ne \di{j_{1}',j_{2}'}$. For
every $n \ge 1$, define the words:
\begin{align*}
  u_{1,n} &:=
  \binom{a_1}{d[1,1]}\binom{b_1}{d[1,1]}\binom{a_1}{d[1,2]}\binom{b_1}{d[1,2]}~\cdots~\binom{a_1}{d[1,n^{2}]}
  \binom{b_1}{d[1,n^{2}]}
  \\
  u_{i,n} & := \binom{a_i}{d[i,1]}u_{i-1,n}\binom{b_i}{d[i,1]}\binom{a_i}{d[i,2]}u_{i-1,n}\binom{b_i}{d[i,2]}\cdots
  \binom{a_i}{d[i,n^2]}u_{i-1,n}\binom{b_i}{d[i,n^2]} \\ &\text{
    for all } i \ge 2
\end{align*}

In order to prove that $L(r_i)$ cannot be defined by any expression in
$E_{i}$, we will show the following property: if $u_{i,n}$ occurs as a
sub-word of a word $w$ in the language of a ``sufficiently small''
expression $e_{i}^1$, then the same expression accepts a word where
some $a_j$ and a matching $b_j$ have different data values. Let
$\mismatch_{i,n}$ be the set of all data words obtained from $u_{i,n}$
by modifying the data values so that there exist two positions $p, p'$
with $p < p'$ and a $j \le i$ such that: $p$ contains $\binom{a_j}{d}$
and $p'$ contains $\binom{b_j}{d'}$ with $d \neq d'$; moreover between
positions $p$ and $p'$, $b_j$ does not occur in the word. We consider
expressions in which no two occurrences of the binding operator use
the same variable. For an expression $e$, let $|\aut{e}|$ denote the
number of states in the automaton $\aut{e}$ and $|\var(e)|$ denote the
number of variables in $e$.

\begin{lemma}
  \label{lem:u-words-property}
  Let $e_{i}^1$ be an expression and let $n \in \Nat$ be greater than
  $(|\aut{e}|+1)$ and $(|\var(e)|+1)$ for every sub-expression $e$ of
  $e_i^1$. Let $\val$ be a valuation of $\fv(e_{i}^1)$ and let $x, z$
  be data words. Then: $xu_{i,n} z \in L(e_{i}^1, \val) ~~\implies~~
  x\bar{u}_{i,n} z \in L(e_{i}^1, \val) \text{ for some }
  \bar{u}_{i,n} \in \mismatch_{i,n}$.
\end{lemma}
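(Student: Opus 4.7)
The approach is induction on $i$, using the automaton view $\aut{e_i^1}$ as the central technical tool. Because $e_i^1 \in E_i$ does not allow a top-level Kleene star, $\aut{e_i^1}$ is acyclic, with transitions labelled either by single-letter bindings $a\bind{x}$ or by expressions in $F_{i-1}$.

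For the inductive step, I would fix an accepting run of $\aut{e_i^1}$ on $x u_{i,n} z$. The run uses fewer than $|\aut{e_i^1}| < n$ transitions, so it partitions $u_{i,n}$ into fewer than $n$ chunks. Recall that $u_{i,n}$ consists of $n^{2}$ consecutive blocks of the shape $a_i \cdot u_{i-1,n} \cdot b_i$, with pairwise distinct matching data values $d[i,j]$. A pigeonhole argument comparing block boundaries with transition boundaries yields some transition $t_l$ whose chunk $y$ contains at least $n - 2$ complete blocks entirely. Since an $a\bind{x}$-transition consumes only one letter, $t_l$ must be labelled by some $F_{i-1}$ expression $f$, and $f$ matches $y$ under a valuation $\val'$ obtained from $\val$ together with the bindings executed along the run before $t_l$.

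The next step is to argue that acceptance of $y$ by $f$ under $\val'$ can be shifted to acceptance of a modified word exhibiting a mismatch, without disturbing the rest of the run of $\aut{e_i^1}$ (which does not touch the data values inside $y$ and reaches the boundaries of $t_l$ with the same $\val'$ regardless). I would prove this by a structural induction on $f \in F_{i-1}$: sums and concatenations reduce to smaller sub-expressions that carry a large enough portion of $y$, while the Kleene-star case either identifies a single iteration spanning a full block (allowing direct recursion on the body) or forces many short iterations on which a secondary counting argument applies. When the structural induction reaches a sub-expression lying in $E_{i-1}$, the sub-word matched by it still contains $u_{i-1,n}$ as an infix inside one of the complete blocks, so the induction hypothesis applied to this $E_{i-1}$ sub-expression yields a mismatch inside $u_{i-1,n}$; since $i - 1 \le i$, this mismatch lies in $\mismatch_{i,n}$.

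The base case $i = 1$ does not invoke the induction: here $f \in F_0$ has no bindings at all, so its conditions reference only free variables, whose values under $\val'$ form a set of size at most $|\var(f)| < n$. Consequently, at least one $d[1,j]$ occurring in $y$ lies outside the range of $\val'$; the conditions satisfied at the corresponding $b_1$-position must all be of inequality type, and they remain satisfied if $d[1,j]$ is replaced by any other sufficiently fresh value, producing a mismatch. The hard part will be the Kleene-star case of the structural induction on $f$: when no single iteration of $f_1^{*}$ covers a complete block, the counting argument must either locate an iteration long enough to recurse on, or leverage the freshness of the $d[i,j]$ values (in the same spirit as the base case) to modify data within an iteration whose condition constraints on the fresh value are necessarily of inequality type.
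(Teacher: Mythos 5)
Your proposal follows essentially the same route as the paper's proof: induction on $i$, the acyclic automaton $\aut{e_i^1}$, a pigeonhole step isolating one $F_{i-1}$-labelled transition whose chunk contains many complete blocks $a_i\,u_{i-1,n}\,b_i$, and then the case split between some sub-chunk containing a whole $u_{i-1,n}$ (recurse via the induction hypothesis, landing in $\mismatch_{i,n}$ since $j\le i$ is allowed) versus no sub-chunk seeing an $a_i$ together with its matching $b_i$ (change a $d[i,j]$ lying outside $\range(\val')$, using $|\range(\val')|<n$); your structural induction on $f\in F_{i-1}$ is just the syntactic unrolling of the paper's generalized automaton $\aut{f_{i-1}^1}$ whose transitions carry $E_{i-1}$ expressions. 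The one imprecision is your claim that for $i\ge 2$ the conditions constraining the orphaned $b_i$'s fresh value are ``necessarily of inequality type'': an $E_{i-1}$ chunk may bind a variable to that very value and later test equality against it, so the correct step (as in the paper) is to rename \emph{all} occurrences of $d[i,j]$ inside that chunk consistently to a fresh value, which preserves membership because REWB languages are invariant under data automorphisms fixing $\range(\val')$.
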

\begin{proof}[Proof idea] 
  By induction on $i$. Suppose $x u_{i,n} z \in L(e_i^1, \nu)$. 
  The run of $\aut{e_{i}^{1}}$ on $x u_{i,n} z$ consists
  of at most $n$ transitions, since the automaton is acyclic and has
  at most $n$ states. Each of the (at most) $n$ transitions reads some
  sub-word in the language of some sub-expression $f_{i-1}^{1}$, while
  the whole word consists of $n^{2}$ occurrences of
  $a_{i}u_{i-1,n}b_{i}$.  Hence, at least one sub word consists of
  $n$ occurrences of $a_{i}u_{i-1,n}b_{i}$. A run of
  $\aut{f_{i-1}^{1}}$ on such a sub-word is shown below.
  \begin{center}
    \begin{tikzpicture}
      \node [left] at (-1, 0) {\scriptsize $x'$}; \node [right] at
      (11.1, 0) {\scriptsize $z'$};

\begin{scope}
  \shade [rounded corners, inner color=gray!5, outer color=gray!15,
  draw=black] (-0.85, -0.3) rectangle (0.85, 0.3); \node at (0,0)
  {\scriptsize $a_i ~u_{i-1,n}~ b_i$};
\end{scope}

\begin{scope}[xshift=1.8cm]
  \shade [rounded corners, inner color=gray!5, outer color=gray!15,
  draw=black] (-0.85, -0.3) rectangle (0.85, 0.3); \node at (0,0)
  {\scriptsize $a_i ~u_{i-1,n}~ b_i$};
\end{scope}

\begin{scope}[xshift=3.6cm]
  \shade [rounded corners, inner color=gray!5, outer color=gray!15,
  draw=black] (-0.85, -0.3) rectangle (0.85, 0.3); \node at (0,0)
  {\scriptsize $a_i ~u_{i-1,n}~ b_i$};
\end{scope}

\begin{scope}[xshift=6.6cm]
  \shade [rounded corners, inner color=gray!5, outer color=gray!15,
  draw=black] (-0.85, -0.3) rectangle (0.85, 0.3); \node at (0,0)
  {\scriptsize $a_i ~u_{i-1,n}~ b_i$};
\end{scope}

\begin{scope}[xshift=8.4cm]
  \shade [rounded corners, inner color=gray!5, outer color=gray!15,
  draw=black] (-0.85, -0.3) rectangle (0.85, 0.3); \node at (0,0)
  {\scriptsize $a_i ~u_{i-1,n}~ b_i$};
\end{scope}

\begin{scope}[xshift=10.2cm]
  \shade [rounded corners, inner color=gray!5, outer color=gray!15,
  draw=black] (-0.85, -0.3) rectangle (0.85, 0.3); \node at (0,0)
  {\scriptsize $a_i ~u_{i-1,n}~ b_i$};
\end{scope}

\draw [thick, loosely dotted] (4.7, 0) -- (5.7,0);

\node [fill, circle, inner sep=1pt] (0) at (-1.7, -0.9) {}; \node
[fill, circle, inner sep=1pt] (1) at (0, -0.9) {}; \node [fill,
circle, inner sep=1pt] (2) at (3.6, -0.9) {}; \node [fill, circle,
inner sep=1pt] (s1) at (7, -0.9) {}; \node [fill, circle, inner
sep=1pt] (s2) at (8.4, -0.9) {}; \node [fill, circle, inner sep=1pt]
(s3) at (10.2, -0.9) {}; \node [fill, circle, inner sep=1pt] (s4) at
(11.5, -0.9) {};

\node [below] at (-1.7, -0.9) {\scriptsize $q'_0$}; \node [below] at
(-0, -0.9) {\scriptsize $q'_1$}; \node [below] at (3.6, -0.9)
{\scriptsize $q'_2$}; \node [below] at (7, -0.9) {\scriptsize
  $q'_{s-3}$}; \node [below] at (8.4, -0.9) {\scriptsize $q'_{s-2}$};
\node [below] at (10.2, -0.9) {\scriptsize $q'_{s-1}$}; \node [below]
at (11.5, -0.9) {\scriptsize $q'_s$};

\begin{scope}[very thin, ->, >=stealth]
    \draw (0) -- node[auto=left] {\scriptsize $e_{i-1}^{1}$} (1); \draw (1) --
    node[auto=left] {\scriptsize $e_{i-1}^{2}$} (2); \draw (s1) --
    node[auto=left] {\scriptsize $e_{i-1}^{s-2}$} (s2); \draw (s2)
  -- node[auto=left] {\scriptsize $e_{i-1}^{s-1}$} (s3); \draw (s3) --
  node[auto=left] {\scriptsize $e_{i-1}^{s}$} (s4);
\end{scope}

\node at (5, -0.7) {$\cdots$};
\end{tikzpicture}
\end{center}
Every transition of this run reads sub-words in
the language of some sub-expression $e_{i-1}^{j}$. If some transition
of this run reads an entire sub-word $u_{i-1,n}$ (as in transition
$q'_1 \xra{~~} q'_2$), then we can create a mismatch inside this
$u_{i-1,n}$ by induction hypothesis. Otherwise, none of the
transitions read an $a_{i}$ and the corresponding $b_{i}$ together (as
in $q'_{s-2} \xra{~~} q'_{s-1}$ in the figure). None of the $b_{i}$s
is compared with the corresponding $a_{i}$, so the data value of one
of the $b_{i}$s can be changed to create a mismatch. The resulting
data word will be accepted provided the change does not result in a
violation of some condition. Since the range of the valuation has at
most $(n-1)$ distinct values, one of the $n$ $b_{i}$s is safe for
changing the data value.
\end{proof}

\begin{theorem}
  For any $i$, the language $L(r_i)$ cannot be defined by any
  expression in $E_{i}$.
\end{theorem}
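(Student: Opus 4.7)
The plan is to prove the theorem by contradiction, assuming $L(r_i) = L(e_i^1)$ for some $e_i^1 \in E_i$ and then invoking Lemma~\ref{lem:u-words-property} to produce a word that $e_i^1$ must accept but $r_i$ cannot. Since $e_i^1$ has only finitely many sub-expressions, I first choose a single $n$ larger than $|\mathcal{A}(e)|+1$ and $|\var(e)|+1$ for every sub-expression $e$ of $e_i^1$, so that the hypothesis of Lemma~\ref{lem:u-words-property} holds uniformly.

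Next I would check that $u_{i,n} \in L(r_i)$. This is a direct induction on $i$ following Definition~\ref{def:strict-expressions}: the base case $r_1 = (a_1\bind{x_1}(b_1[x_1^=]))^*$ accepts $u_{1,n}$ because the $k$-th iteration of the outer star binds $x_1$ to $d[1,k]$ and the following $\binom{b_1}{d[1,k]}$ satisfies $[x_1^=]$; the induction step uses $r_i = (a_i \bind{x_i}(r_{i-1}\, b_i[x_i^=]))^*$, where the $k$-th iteration binds $x_i$ to $d[i,k]$, the inner $r_{i-1}$ matches $u_{i-1,n}$ by the IH, and the trailing $\binom{b_i}{d[i,k]}$ satisfies $[x_i^=]$. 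Since $L(r_i) = L(e_i^1)$, we then have $u_{i,n} \in L(e_i^1,\val)$ for some compatible $\val$, and Lemma~\ref{lem:u-words-property} (applied with $x = z = \e$) yields some $\bar u_{i,n} \in \mismatch_{i,n}$ with $\bar u_{i,n} \in L(e_i^1) = L(r_i)$.

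The remaining step, and the one that requires the most care, is to show $\bar u_{i,n} \notin L(r_i)$, contradicting the previous line. The argument rests on the rigid structure of $r_i$: the only letters appearing in $r_i$ are $a_1,b_1,\ldots,a_i,b_i$, each occurring exactly once in the expression and at a fixed structural position, so any parse of a word in $L(r_i)$ partitions it uniquely into iterations of the form $a_i\,(\text{block of }r_{i-1})\,b_i$, recursively. Consequently, for every $j \le i$, each occurrence of $a_j$ in a parsed word triggers the binding $\bind{x_j}$ and the next occurrence of $b_j$ is forced to check $[x_j^=]$ against the value just bound, with no intervening re-binding of $x_j$ (since by construction there are no other $a_j$'s between consecutive $b_j$'s in a word accepted by $r_i$). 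The definition of $\mismatch_{i,n}$ gives positions $p < p'$ with $a_j$ carrying $d$ at $p$, $b_j$ carrying $d' \ne d$ at $p'$, and no $b_j$ in between; since the letter structure of $\bar u_{i,n}$ is the same as $u_{i,n}$, these positions have no intervening $a_j$ either, so the forced $[x_j^=]$ check at $p'$ would require data value $d$, contradicting $d' \ne d$.

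The main obstacle is justifying this ``forced matching'' between bindings and conditions in $r_i$. I would handle it by a short induction on $i$ that explicitly tracks, for each accepted word, which position of the word corresponds to which syntactic occurrence of $a_j$ or $b_j$ in $r_i$; the key fact is that between two consecutive $b_j$'s in any word accepted by $r_i$ there is exactly one $a_j$, which is the binding whose value must match the condition at the second $b_j$. Combined with the structural observation that $\bar u_{i,n}$ and $u_{i,n}$ have identical letter sequences, this completes the contradiction and hence the theorem.
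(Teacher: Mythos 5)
Your proposal is correct and follows essentially the same route as the paper: pick $n$ large enough for Lemma~\ref{lem:u-words-property}, observe $u_{i,n} \in L(r_i) = L(e_i^1)$, apply the lemma to obtain a mismatched word in $L(r_i)$, and derive a contradiction from the fact that $L(r_i)$ contains no mismatched words. The extra care you take in justifying that $L(r_i)$ excludes $\mismatch_{i,n}$ (the forced pairing of each $a_j\bind{x_j}$ with the next $b_j[x_j^=]$) is a legitimate elaboration of a step the paper asserts without proof, not a different argument.
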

\begin{proof}
  Suppose $r_i$ is equivalent to an expression $e_{i}^1$. Pick an $n$
  bigger than $|\aut{e}|$ and $|\fv(e)|$ for every sub-expression $e$
  of $e_i^{1}$. The word $u_{i,n}$ belongs to $L(r_i)$ and hence
  $L(e_{i}^1)$. By Lemma~\ref{lem:u-words-property}, we know that if
  this is the case, then $\bar{u}_{i,n} \in L(r_i)$ for some word
  $\bar{u}_{i,n} \in \mismatch_{i,n}$. But $L(r_i)$ cannot contain
  words with a mismatch. A contradiction.
\end{proof}

Given an expression at some level, it is possible that its language is
defined by an expression at lower levels. Next we show that it is
undecidable to check this.

\begin{theorem}
  \label{thm:undec-height}
  Given an expression in $F_{i+1}$, checking if there exists a
  language equivalent expression in $F_i$ is undecidable.
\end{theorem}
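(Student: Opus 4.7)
The plan is to reduce universality of REWBs, known to be undecidable by \cite{NSV2004,KRV2014}, to the problem of deciding whether a given expression in $F_{i+1}$ admits a language-equivalent expression in $F_{i}$. The idea is to glue an arbitrary REWB $r$ to the hierarchy witness $r_{i+1}$ from Definition~\ref{def:strict-expressions} so that the composite language collapses to a simple regular language exactly when $L(r)$ is universal, while otherwise retaining the hardness of $r_{i+1}$.

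Concretely, let $r$ be an REWB over alphabet $\Sigma_{1}$ and let $\Sigma_{2}$ be the alphabet of $r_{i+1}$, chosen disjoint from $\Sigma_{1}$. Define
$$r' := r \cdot \Sigma_{2}^{*} + \Sigma_{1}^{*} \cdot r_{i+1},$$
where $\Sigma_{1}^{*}$ and $\Sigma_{2}^{*}$ denote the evident $F_{0}$ expressions. Since universality is already undecidable for REWBs near the bottom of the hierarchy, we may take $r \in F_{1} \subseteq F_{i+1}$, and a routine grammar check then gives $r' \in F_{i+1}$. I will claim that $L(r')$ has an $F_{i}$-equivalent iff $L(r)$ is universal.

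For the forward direction, if $L(r) = (\Sigma_{1} \times \Dd)^{*}$, then every word in $\Sigma_{1}^{*} \cdot L(r_{i+1})$ decomposes uniquely (by disjointness of $\Sigma_{1}$ and $\Sigma_{2}$) as a prefix in $L(r)$ followed by a suffix in $\Sigma_{2}^{*}$, so $L(r') = (\Sigma_{1} \times \Dd)^{*} \cdot (\Sigma_{2} \times \Dd)^{*}$, a plain regular language expressible in $F_{0} \subseteq F_{i}$. For the backward direction, suppose $L(r)$ is not universal and fix $w \in \Sigma_{1}^{*} \setminus L(r)$. Assume for contradiction that $L(r') = L(e)$ for some $e \in F_{i}$. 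Exploiting the inclusion $F_{i} \subseteq E_{i+1}$, view $e$ as an expression in $E_{i+1}$ and pick $n$ large enough to invoke Lemma~\ref{lem:u-words-property} at index $i+1$. Since $w \cdot u_{i+1,n} \in L(r')$ via the second summand, the lemma yields some $\bar{u}_{i+1,n} \in \mismatch_{i+1,n}$ with $w \cdot \bar{u}_{i+1,n} \in L(r')$; but unique decomposition over the disjoint alphabets $\Sigma_{1},\Sigma_{2}$ forces either $w \in L(r)$ (ruled out by the choice of $w$) or $\bar{u}_{i+1,n} \in L(r_{i+1})$ (ruled out because $L(r_{i+1})$ contains no mismatched word).

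The main obstacle is arranging the construction so that Lemma~\ref{lem:u-words-property}, stated for expressions in $E_{i+1}$ acting on $u_{i+1,n}$, can be brought to bear on a hypothetical $F_{i}$-equivalent of $r'$; this is precisely what motivates putting the witness $r_{i+1}$ (rather than $r_{i}$) on the right-hand summand and is resolved by the inclusion $F_{i} \subseteq E_{i+1}$. A secondary technical point is that the reduction needs universality to be undecidable at a level low enough for $r'$ to remain in $F_{i+1}$ uniformly in $i$; this is covered by the cited universality results, which already apply near the bottom of the hierarchy.
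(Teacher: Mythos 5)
Your proposal is correct, and it takes a genuinely different route from the paper. The paper reduces directly from PCP: it builds a bespoke expression $\Delta \in E_{i+1}$ accepting all non-encodings of solutions of the form $w_{1}\,\#\,r_{i}\,\#\,w_{2}$, shows that $\Delta$ collapses to $\G^{*}\#r_{i}\#\G^{*} \in F_{i}$ when no solution exists, and proves the hard direction with a hand-rolled automaton-splitting argument (the appendix's Lemma A.2, which must separately handle the case where the hypothetical $F_{i}$-automaton reads $u_{i,n}$ whole and the case where it splits $\th_{1}$ from $\th_{2}'$). You instead black-box the undecidability of universality, concatenate the witness $r_{i+1}$ over a disjoint alphabet, and get the hard direction almost for free: an $F_{i}$-equivalent would be an $E_{i+1}$ expression, so Lemma~\ref{lem:u-words-property} at index $i+1$ injects a mismatch into $u_{i+1,n}$ that neither summand of $r'$ can absorb. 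This is more modular, and as a bonus your yes-instances collapse all the way to $F_{0}$, so the same reduction shows that equivalence to \emph{any} level below $i+1$ is undecidable. The one load-bearing claim you assert without proof is that universality is undecidable already for expressions in $F_{1}$ (so that $r' \in F_{i+1}$ uniformly in $i$): the references are cited in the paper only for unrestricted REWBs, and if $r$ had to range over all REWBs your $r'$ would escape $F_{i+1}$. The claim is true --- the standard PCP-based non-solution expressions use only non-iterated bindings and hence sit in $E_{1}$ --- but it needs to be checked explicitly; the paper avoids this dependency precisely by redoing the PCP encoding from scratch. Two further minor points: your separating word should be a data word $w \in (\S_{1}\times\Dd)^{*}\setminus L(r)$ rather than a word over $\S_{1}$, and the hypothetical equivalent must be renamed so that no two bindings share a variable before the lemma applies; both are routine.
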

\begin{proof}[Proof idea] By reduction from Post's Correspondence
    Problem (PCP). The basic idea is from the proof of undecidability
    of universality of REWBs and related formalisms
    \cite{NSV2004,LTV2013}.  For an instance $\set{(u_{1},v_{1}),
    \ldots, (u_{n},v_{n})}$ of PCP, a solution (if it exists) can be
    encoded by a data word of the form $w_{1} \#r_{i} \# w_{2}$, where
    $w_{1}$ is made up of $u_{i}$'s, $w_{2}$ is made up of $v_{i}$'s
    and $r_{i}$ is from Definition~\ref{def:strict-expressions}. To
    ensure that such a data word indeed represents a solution, we need
    to match up the $u_{i}$'s in $w_{1}$ with the $v_{i}$'s in $w_{2}$,
    which can be done through matching data values.
    Consider the language of data words of the form $w_{1}' \# r_{i}
    \# w_{2}'$ that are \emph{not} solutions of the given PCP
    instance.  This language can be defined by an expression $\D$ in
    $E_{i+1}$, which compares data values in the left of $\# r_{i} \#$
    with those on the right side, to catch mismatches. We can prove
    that no equivalent expression exists in lower levels, using
    techniques used in Lemma~\ref{lem:u-words-property}. On the other
    hand, if the given PCP instance does not have a solution, no data
    word encodes a solution, so the given language is defined by
    $\S^{*}r_{i}\S^{*}$, which is in $F_{i}$.
\end{proof}


\section{Complexity of Query Evaluation}
\label{sec:query-evaluation}
In this section, we will study how the depth of nesting of iterated
bindings affects the complexity of evaluating queries. An instance of
the query evaluation problem consists of a data graph $G$, an REWB
$e$, a valuation $\val$ for $\fv(e)$ and a pair $\struct{u,v}$ of
nodes in $G$. The goal is to check if $u$ is connected to $v$ by a
data path in $L(e,\val)$.

\subsection{Upper Bounds}
An expression in $F_{i}$ can be thought of as a standard regular
expression (without data values) over the alphabet of its
sub-expressions. This is the main idea behind our upper bound results.
The main result proves that evaluating queries in $E_{i}$ can be done
in $\S_{i}$ in the polynomial time hierarchy.
\begin{lemma}
  \label{lem:evalFiWithOracleForEi}
  With an oracle for evaluating $E_{i}$ queries,
  $F_{i}$ queries can be evaluated in polynomial time.
\end{lemma}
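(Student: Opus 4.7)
The plan is to exploit the observation hinted at in the opening of Section~4: an expression $f \in F_{i}$ is built from expressions in $E_{i}$ using only $+$, $\cdot$, and ${}^{*}$. So, if we treat each maximal $E_{i}$-subexpression occurring in $f$ as a single ``letter'', then $f$ is a plain regular expression over this finite alphabet of sub-expressions, with no data values and no bindings at the top level. Crucially, because the $F_{i}$ grammar never introduces a fresh binding outside an $E_{i}$-subexpression, no variable that is free inside some sub-expression can be captured higher up, so the global valuation $\val$ supplied as input to the query is exactly the valuation that each oracle call will need.

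\medskip

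\noindent\textbf{Algorithm.} Given $f \in F_{i}$, a data graph $G=(V,E)$, a compatible valuation $\val$, and a pair $\langle u,v\rangle$ of nodes, I would proceed as follows. First, parse $f$ and collect its maximal $E_{i}$-subexpressions $e_{i}^{1},\ldots,e_{i}^{m}$; these play the role of a finite alphabet $\Gamma$. Using the standard Thompson-style construction, build in polynomial time an NFA $\aut{f}$ over $\Gamma$ (possibly with $\e$-transitions) whose language, viewed over $\Gamma$, is exactly the regular language of $\Gamma$-sequences obtained from $f$. Second, construct the product graph $H$ on vertex set $V \times Q$, where $Q$ is the state set of $\aut{f}$. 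For every transition $q \xra{e_{i}^{j}} q'$ of $\aut{f}$ and every pair $(u',v') \in V \times V$, invoke the $E_{i}$ oracle on the instance $(G, e_{i}^{j}, \val, \langle u',v'\rangle)$, and add an edge $(u',q) \to (v',q')$ in $H$ iff the oracle answers YES; for each $\e$-transition $q \to q'$, add edges $(u',q)\to(u',q')$ for all $u'$. Finally, answer YES iff $(v, q_f)$ is reachable in $H$ from $(u, q_0)$ for some accepting state $q_f$, which is a polynomial-time reachability check.

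\medskip

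\noindent\textbf{Correctness and complexity.} Soundness and completeness follow directly from the semantics of $+$, $\cdot$, and ${}^{*}$: a data path $\pi$ from $u$ to $v$ lies in $L(f,\val)$ iff it factors as $\pi = \pi_{1}\pi_{2}\cdots\pi_{k}$ with each $\pi_{\ell} \in L(e_{i}^{j_{\ell}},\val)$ for some word $e_{i}^{j_{1}}\cdots e_{i}^{j_{k}}$ accepted by $\aut{f}$; each such factorization corresponds to a path in $H$, and vice versa. For complexity, $|\aut{f}|$ is linear in $|f|$, so the number of oracle calls is $O(|V|^{2}\cdot|f|)$, each of polynomial size, and the final reachability costs $O(|V|^{2}\cdot|f|)$ on top. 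The whole procedure runs in polynomial time with $E_{i}$ as an oracle.

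\medskip

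\noindent\textbf{Main obstacle.} The only delicate point is the scoping claim used above: that the oracle on a sub-expression $e_{i}^{j}$ can be called with the \emph{same} ambient valuation $\val$ supplied as input, independently of which other $e_{i}^{j'}$s were matched earlier along the path in $\aut{f}$. This is where we rely on the shape of the $F_{i}$ grammar rather than a generic regex decomposition: since bindings $a\bind{x}(\cdot)$ are introduced only inside the $E_{i}$ production, and since we chose \emph{maximal} $E_{i}$-subexpressions, any binding performed while matching $e_{i}^{j}$ is created after $e_{i}^{j}$'s initial letter $a$ and used only within $e_{i}^{j}$; it cannot be observed in $e_{i}^{j'}$. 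Making this formal (and handling the mild bookkeeping around $\e$-transitions and renaming variables so that no two $E_{i}$-subexpressions share a bound variable name) is the step that requires care, but after that the reduction to NFA $\times$ data-graph reachability is routine.
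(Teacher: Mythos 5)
Your proposal is correct and follows essentially the same route as the paper: treat $f$ as an ordinary regular expression over the finite alphabet of its $E_{i}$-subexpressions, use the oracle to decide which node pairs each subexpression connects, and finish with a product-automaton reachability check. The only cosmetic differences are that the paper first materializes the oracle answers as new labeled edges in an intermediate graph $G'$ before taking the product, and separately notes the $F_{0}$ base case where $a[c]$ subexpressions are evaluated directly without an oracle; your scoping observation about bindings being confined to $E_{i}$-subexpressions is exactly the justification the paper relies on implicitly.
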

\begin{proof}[Proof idea]
  Suppose the query $f_{i}^{1}$ is to be evaluated on the data graph
  $G$ and $f_{i}^{1}$ consists of the sub-expressions $e_{i}^{1},
  \ldots, e_{i}^{m}$ in $E_{i}$.  For every $j$, add an edge labeled
  $e_{i}^{j}$ between those pairs $\struct{v_{1},v_{2}}$ of nodes of
  $G$ for which $\struct{v_{1}, v_{2}}$ is in the evaluation of
  $e_{i}^{j}$ on $G$. Evaluating the sub-expressions can be done with
  the oracle. Now $f_{i}^{1}$ can be treated as a standard regular
  expression over the finite alphabet $\set{e_{i}^{1}, \ldots,
  e_{i}^{m}}$,  and can be evaluated in polynomial time using standard
  automata theoretic techniques.
\end{proof}
\begin{theorem}
  \label{thm:complexityEvaluation}
  For queries in $E_{i}$, the evaluation problem belongs to $\S_{i}$.
\end{theorem}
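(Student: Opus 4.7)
The plan is to proceed by induction on $i$, using Lemma~\ref{lem:evalFiWithOracleForEi} to propagate the bound from $E_i$ to $F_i$ at each stage. The inductive hypothesis is that $E_j$ queries are in $\S_j$ for all $j < i$, which, combined with the lemma, places $F_j$ evaluation in $P^{\S_j}$.

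For the base case $i=1$, an $E_1$ expression is built from $F_0$ sub-expressions (plain regular expressions without bindings, possibly referring to free variables through conditions) by concatenation, union, and binding operators, with no Kleene star at the $E_1$ level. Consequently, $\aut{e_1^1}$ is acyclic with polynomially many states. The $NP$ algorithm proceeds by existentially guessing a run of $\aut{e_1^1}$ as a sequence of states and transition labels, a matching sequence of nodes in $G$, and the data values bound by each $a\bind{x}$ transition encountered along the run. For every transition labeled by an $F_0$ sub-expression $f$, the algorithm verifies in deterministic polynomial time, via standard regular path query evaluation with the free variables of $f$ instantiated by the current valuation, that the corresponding endpoints of $G$ are connected by a data path in $L(f, \val')$. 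Since $\aut{e_1^1}$ is acyclic the run is of polynomial length, so all guesses are polynomial, and the problem lies in $NP = \S_1$.

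For the inductive step, let $e_i^1 \in E_i$. The automaton $\aut{e_i^1}$ is acyclic and labels transitions either by bindings $a\bind{x}$ or by $F_{i-1}$ sub-expressions. Running exactly the same scheme, the algorithm existentially guesses a polynomial length run, a matching sequence of vertices, and the data values for each binding. For each transition labeled by an $F_{i-1}$ sub-expression $f$, it invokes the $F_{i-1}$ oracle on $(G, f, \val')$, where $\val'$ is the valuation extended with the bindings guessed so far. By the inductive hypothesis together with Lemma~\ref{lem:evalFiWithOracleForEi}, $F_{i-1}$ evaluation belongs to $P^{\S_{i-1}}$, so the overall algorithm is an $NP$ machine making calls to a $P^{\S_{i-1}}$ oracle. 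Since $P^{P^{\S_{i-1}}} = P^{\S_{i-1}}$, this collapses to an $NP$ computation with $\S_{i-1}$ oracle access, placing the problem in $NP^{\S_{i-1}} = \S_i$.

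The main obstacle is correctly threading valuations through the guessed run. Each $F_{i-1}$ transition label $f$ has to be evaluated against a valuation defined on all of $\fv(f)$, which generally contains both variables in $\fv(e_i^1)$ and variables bound earlier in the run by enclosing $a\bind{x}$ transitions. Acyclicity of $\aut{e_i^1}$ is what makes this manageable: the live bindings at any position on the run are precisely those produced by the binding transitions preceding that position, so the bookkeeping is linear in the run. A minor additional point is that the domain from which data values for bindings are guessed can be restricted to the polynomially many values appearing on edges of $G$ or in the image of the initial valuation, since the semantics is invariant under permutations of $\Dd$ fixing this finite set.
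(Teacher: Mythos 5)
Your proof is correct and follows essentially the same route as the paper's: both arguments guess the $E_i$-level nondeterministic choices (the decomposition of the path into sub-paths and the values captured by each binding), delegate the $F_{i-1}$ pieces to Lemma~\ref{lem:evalFiWithOracleForEi}, and close the induction with an \NP{} machine relative to a $\S_{i-1}$ oracle. The only cosmetic difference is that the paper recurses on the expression structure after a union-normal-form step, whereas you guess a run of the acyclic automaton $\aut{e_{i}^{1}}$.
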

\begin{proof}[Proof idea]
    Since bindings in $E_{i}$ are not iterated, each binding is
    performed at most once. The data value for each variable is
    guessed non-deterministically. The expression can be treated as a
    standard regular expression over its sub-expressions and the
    guessed data values. The sub-expressions are in $F_{i-1}$, which
    can be evaluated in polynomial time
    (Lemma~\ref{lem:evalFiWithOracleForEi}) with an oracle for
    evaluating queries in $E_{i-1}$. This argument will not work in
    general for arbitrary REWBs --- bindings that are nested deeply
    inside iterations and other bindings may occur more than
    polynomially many times in a single path.
\end{proof}

Next we consider the query evaluation problem with the size of the
query as the parameter. An instance of the \emph{parameterized
weighted circuit satisfiability} problem consists of a Boolean circuit
and the parameter $k \in \Nat$. The goal is to check if the circuit
can be satisfied by a truth assignment of weight $k$ (i.e., one
that sets exactly $k$ propositional atoms to true). The
class \wop{} is the set of all parameterized problems which are
\fpt{}-reducible to the weighted circuit satisfiability problem.
\begin{theorem}
    \label{thm:evalLevelInWP}
    Evaluating REWB queries in $E_{1}$, parameterized by the size of
    the query is in \wop{}.
\end{theorem}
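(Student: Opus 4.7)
The plan is to use the standard characterization of \wop{}: a parameterized problem is in \wop{} iff there is a non-deterministic algorithm that runs in time $f(k)\cdot n^{O(1)}$ and performs at most $f(k)\cdot \log n$ non-deterministic binary choices (equivalently, an \fpt{} verifier reading a witness of that length). We design such an algorithm for evaluating a query $e \in E_{1}$, with parameter $k = |e|$. The key structural point is that in $E_{1}$ no binding operator is in the scope of a Kleene star (stars appear only inside $F_{0}$ sub-expressions, which are binding-free), so the total number of syntactic binding operators in $e$ is bounded by $k$ and each of them is exercised at most once on any matching data path.

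Let $D_{G}\subseteq \Dd$ be the set of data values that occur in the edge labels of $G$, so that $|D_{G}|\le |G|$. The algorithm non-deterministically guesses a value $d_{j}\in D_{G}$ for each binding operator $a\bind{x_{j}}(\cdot)$ occurring in $e$; this uses at most $k\lceil \log|G|\rceil$ non-deterministic bits, which fits the $f(k)\log n$ budget of \wop{}. To verify the guess, extend $\val$ to $\val'$ by setting $\val'(x_{j})=d_{j}$ for every bound variable $x_{j}$, and rewrite $e$ into $e'$ by replacing every sub-expression $a\bind{x_{j}}(r)$ with $a[x_{j}^{=}]\cdot r$. The resulting $e'$ has no binding operators and size $O(|e|)$; every condition in $e'$ mentions only variables that are given fixed values by $\val'$. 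Hence $e'$ can be treated as a standard regular expression whose transitions carry constraints of the form ``data value equals $d_{j}$'' or ``data value is different from $d_{j}$''. The pair $\struct{u,v}\in e'(G)$ can then be decided in polynomial time by compiling $e'$ into an NFA of $O(k)$ states and searching for reachability in its product with $G$, exactly in the style of Lemma~\ref{lem:evalFiWithOracleForEi}.

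For correctness, observe that any accepting run of $e'$ on $G$ under $\val'$ corresponds directly to a matching data path of $e$ on $G$ under $\val$, since the newly inserted $[x_{j}^{=}]$ forces the edge read at the binding position to carry the guessed value $d_{j}$. Conversely, any witnessing data path of $e$ assigns to each exercised bound variable a value occurring in some edge of $G$, hence a value in $D_{G}$; guessing these values (and anything in $D_{G}$ for bindings sitting in bypassed union branches, since such variables are then out of scope) causes the verifier to accept. The main obstacle is the careful accounting that the non-determinism stays within the $f(k)\log n$ bound: this is exactly where the ``no star above a binding'' property of $E_{1}$ is essential, because in a level-$i$ expression with $i\ge 2$ a single syntactic binding could be exercised polynomially many times on one path, making the number of required guesses exceed what \wop{} allows.
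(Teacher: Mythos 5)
Your proposal is correct and follows essentially the same route as the paper's proof: both invoke the machine characterization of \wop{} from \cite{CFG2003} (non-deterministic Turing machines with at most $f(k)\log|x|$ non-deterministic steps), observe that in $E_{1}$ each binding is exercised at most once so that the $k$ bound values can be guessed with $O(k\log|G|)$ bits, and then evaluate the residual binding-free expression as a standard regular expression in polynomial time. Your write-up merely makes explicit some details the paper leaves implicit (the rewriting $a\bind{x_{j}}(r)\mapsto a[x_{j}^{=}]\cdot r$, restricting guesses to data values occurring in $G$, and the treatment of bindings in bypassed union branches), all of which are sound.
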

\begin{proof}[Proof idea]
    It is proved in \cite[Lemma~7, Theorem~8]{CFG2003} that a
    parameterized problem is in \wop{} iff there is a
    non-deterministic Turing machine that takes an instance $(x,k)$
    and decides the answer within $f(k)|x|^{c}$ steps, of which at
    most $f(k) \log |x|$ are non-deterministic (for some computable
    function $f$ and a constant $c$). Such a Turing machine exists for
    evaluating REWB queries in $E_{1}$, using the steps outlined in
    the proof idea of Theorem~\ref{thm:complexityEvaluation}.
\end{proof}
Thus, the number of non-deterministic steps needed to evaluate an
$E_{1}$ query depends only logarithmically on the size of the data
graph. It is also known that \wop{} is contained in the class {\sc
para-NP} --- the class of parameterized problems for which there are
deterministic algorithms taking instances $(x,k)$ and computing an
equivalent instance of the Boolean satisfiability problem in time
$f(k)|x|^{c}$. Hence, we can get an efficient reduction to the
satisfiability problem, on which state of the art \textsc{sat} solvers
can be run. Many hard problems in planning fall into this category
\cite{DKP2015}.

We next consider the parameterized complexity of evaluating queries at
higher levels. The parameterized class \emph{uniform}-\xnl{} is the
class of parameterized problems $Q$ for which there exists a
computable function $f:\Nat \to \Nat$ and a non-deterministic
algorithm that, given a pair $(x,k)$, decides if $(x,k) \in Q$ in
space at most $f(k) \log |x|$ \cite[Proposition 18]{CFG2003}.
\begin{theorem}
    \label{thm:xnlUpperBound}
    Evaluating REWB queries, with size of the query as parameter, is
    in uniform-\xnl{}.
\end{theorem}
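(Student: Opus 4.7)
The plan is to simulate a compact automaton form of the REWB $e$ on the data graph $G$ nondeterministically, maintaining only enough state to record the current graph vertex, the current position in the expression, and the current register contents. First I would compile $e$ into a memory automaton $M_{e}$ via a Thompson-style inductive construction: $\e$ yields an $\e$-transition, $a$ and $a[c]$ yield letter transitions (the latter decorated with the condition), $\bind{x_{i}}$ yields a letter transition that writes its read data value into register $x_{i}$, and union, concatenation and Kleene star are handled by the usual $\e$-gluings. The resulting $M_{e}$ has $O(|e|)$ states, $O(|e|)$ transitions, and at most $|e|$ registers.

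Next I would run a nondeterministic procedure that traces a computation of $M_{e}$ over a data path in $G$ starting at $u$. A configuration consists of the current vertex of $G$, the current state of $M_{e}$, the current register valuation, and a small counter that bounds consecutive $\e$-steps by $|M_{e}|$ so as to rule out infinite $\e$-loops. At each step the algorithm either follows an $\e$-transition of $M_{e}$, or guesses an outgoing edge $(w,b,d,w')$ of the current vertex, picks a matching letter transition of $M_{e}$, checks its condition against the current valuation, and updates the designated register if the transition binds a variable. Acceptance is declared when the current vertex is $v$ and $M_{e}$ is in a final state.

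For the space analysis, the current vertex costs $O(\log|G|)$ bits, the state of $M_{e}$ and the $\e$-counter each cost $O(\log|e|) = O(\log k)$ bits, and the valuation holds at most $k = |e|$ data values. The crucial observation is that every value ever placed in a register is either supplied by the input valuation of a free variable or is the data value on an edge read during the simulation, so each register can be represented by an $O(\log|x|)$-bit pointer into the input rather than store a $\Dd$-value verbatim; equality conditions $x_{i}^{=}$ and $x_{i}^{\ne}$ are then checked by comparing the values the pointers resolve to. The total space is $O(k \log|x|) \le f(k) \log|x|$ for a suitable computable $f$, which meets the uniform-\xnl{} bound, and correctness follows from the standard equivalence between REWBs and memory-automaton-style machines, which I would invoke rather than re-derive.

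The main obstacle I expect is the treatment of the infinite data domain: the valuation must be representable in $O(k \log|x|)$ bits even though $\Dd$ itself is not finitely indexed. The pointer-based representation handles this, but one has to argue that it is consistent across register updates (writes overwrite the pointer, reads follow it, equality tests resolve through the input). The secondary, routine issue of $\e$-transition loops in $M_{e}$ is dispatched by the $|M_{e}|$-bounded counter, since any accepting run can be converted into one that takes no more than $|M_{e}|$ consecutive $\e$-steps between letter-reads.
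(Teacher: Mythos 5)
Your proof is essentially correct but takes a genuinely different route from the paper's. The paper does not compile the REWB into a register automaton; instead it invokes a structural short-witness lemma (Lemma~B.2 in the appendix), proved by induction on the level $i$ of the $E_i/F_i$ hierarchy using the Union Normal Form, which shows that any connecting data path can be replaced by one of length at most $((g(k))^2 n)^k$. The algorithm then guesses and verifies a path of that length, storing only a length counter, a valuation, and a current node. Your approach sidesteps the hierarchy entirely: a Thompson-style translation into a memory automaton plus a direct nondeterministic simulation with pointer-represented registers. This is more elementary and more uniform (it never needs the $E_i/F_i$ decomposition), and your pointer observation --- that every register content is either a free-variable value from the input valuation or a data value on some edge of $G$, hence representable in $\mathcal{O}(\log|x|)$ bits --- is exactly the (implicit) reason the paper's ``store a valuation'' step also fits in logarithmic space. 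What the paper's route buys is an explicit, concrete length bound and hence an explicit halting condition.

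That last point is the one thing you must make explicit: bounding consecutive $\e$-steps is not enough, because your simulation as described can follow letter-reading edges of $G$ forever and so does not \emph{decide} membership, which uniform-\xnl{} requires. The standard fix is available precisely because of your pointer representation: a configuration (graph vertex, automaton state, register pointers, $\e$-counter) ranges over a set of size $|x|^{\mathcal{O}(k)}$, any accepting computation that repeats a configuration can be shortened by excising the loop (the future depends only on the configuration), so a global step counter bounded by the number of configurations --- costing another $\mathcal{O}(k\log|x|)$ bits --- guarantees termination without losing completeness. With that counter added, your argument gives the same $f(k)\log|x|$ space bound and is a valid, arguably cleaner, alternative to the paper's proof.
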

\begin{proof}[Proof idea]
  Let $k$ be the size of the query $e_{i}^{1}$ to be evaluated, on a
  data graph with $n$ nodes. Suppose a pair of nodes is connected by a
  data path $w$ in $L(e_{i}^{1})$. Iterations in $e_{i}^{1}$ can only
  occur inside its $F_{i-1}$ sub-expressions. Hence $w$ consists of at
  most $k$ sub-paths, each sub-path $w_{j}$ in the language of some
  sub-expression $f_{i-1}^{j}$. When $f_{i-1}^{j}$ is considered as a
  standard regular expression over its sub-expressions (in $E_{i-1}$),
  there are no bindings. By a standard pigeon hole principle argument,
  we can infer that $w_{j}$ consists of at most $kn$ sub-paths, each
  one in the language of some sub-expression $e_{i-1}^{1}$. This
  argument can be continued to prove that $w$ is of length at most
  $(k^{2}n)^{i}$. The existence of such a path can be guessed and
  verified by a non-deterministic Turing machine in space
  $\mathcal{O}(ik^{2} \log n)$.
\end{proof}

\subsection{Lower Bounds}
We obtain our lower bounds by reducing various versions of the Boolean
formula satisfiability problem to query evaluation. We begin by
describing a schema for reducing the problem of evaluating a Boolean
formula on a given truth assignment to the problem of evaluating a
query on a data graph. The basic ideas for the gadgets we construct
below are from \cite[proofs of Proposition~2, Theorem~5]{LTV2013}.  We
will need to build on these ideas to address finer questions about the
complexity of query evaluation.

Suppose the propositional atoms used in the Boolean formula are among
$\set{\prvar_{1}, \ldots, \prvar_{n}}$. We use $\prvar_{1}, \ldots,
\prvar_{n}$ also as data values.  An edge labeled
$\binom{\prvarl}{\prvar_{j}}$ indicates the propositional atom
$\prvar_{j}$ occurring in a sub-formula.  The data values $\pol$ and
$\nel$ appear on edges labeled with the letter $\ponet$, to indicate
if a propositional atom appears positively or negatively. The symbol $*$
denotes an arbitrary data value different from all others. We will
assume that the Boolean formula is in negation normal form, i.e.,
negation only appears in front of propositional atoms. This
restriction does not result in loss of generality, since any Boolean
formula can be converted into an equi-satisfiable one in negation
normal form with at most linear blowup in the size. The data graph is
a series parallel digraph with a source and a sink, defined as follows
by induction on the structure of the Boolean formula.
\begin{itemize}
  \item Positively occurring propositional atom
    $\prvar_{j}$: $\cdot \xrightarrow{\binom{b}{*}} \cdot
    \xrightarrow{\binom{\ponet}{\pol}} \cdot
    \xrightarrow{\binom{\prvarl}{\prvar_{j}}} \cdot
    \xrightarrow{\binom{e}{*}} \cdot$.
  \item Negatively occurring propositional atom
    $\prvar_{j}$: $\cdot \xrightarrow{\binom{b}{*}} \cdot
    \xrightarrow{\binom{\ponet}{\nel}} \cdot
    \xrightarrow{\binom{\prvarl}{\prvar_{j}}} \cdot
    \xrightarrow{\binom{e}{*}} \cdot$.
  \item $\phi_{1} \land \cdots \land \phi_{r}$: inductively construct
    the data graphs for the conjuncts, then do a standard serial
    composition, by fusing the sink of one graph with the source of the
    next one.
  \item $\phi_{1} \lor \cdots \lor \phi_{r}$: inductively construct
    the data graphs for the disjuncts, then do a standard parallel
    composition, by fusing all the sources into one node and all the
    sinks into another node.
  \item After the whole formula is handled, the source of the
      resulting graph is fused with the sink of the following graph:
      $\cdot \xrightarrow{\binom{a}{\pol}} \cdot
      \xrightarrow{\binom{a}{\nel}} \cdot$.
\end{itemize}
Let $G_{\phi}$ denote the data graph
constructed above for formula $\phi$. The data graph $G_{\phi}$ is
shown below for
$\phi = (\prvar_{1} \lor \neg \prvar_{2}) \land ( (\prvar_{2} \land
\prvar_{3}) \lor (\neg \prvar_{1} \land \prvar_{4}) )$.
    \begin{center}
        \begin{tikzpicture}[>=stealth]
            \node[state] (s1) at (0\ml,0\ml) {};
            \node[state] (s2) at ([xshift=1\ml]s1) {};
            \node[state] (s3) at ([xshift=1\ml]s2) {};
            \node[state] (s4) at ([yshift=0.5\ml,xshift=0.8\ml]s3) {};
            \node[state] (s5) at ([xshift=1\ml]s4) {};
            \node[state] (s6) at ([xshift=1\ml]s5) {};
            \node[state] (s7) at ([yshift=-0.5\ml,xshift=0.8\ml]s6) {};
            \node[state] (s8) at ([yshift=-0.5\ml,xshift=0.8\ml]s3) {};
            \node[state] (s9) at ([xshift=1\ml]s8) {};
            \node[state] (s10) at ([xshift=1\ml]s9) {};
            \node[state] (s11) at ([yshift=0.5\ml,xshift=0.8\ml]s7) {};
            \node[state] (s12) at ([xshift=1\ml]s11) {};
            \node[state] (s13) at ([xshift=1\ml]s12) {};
            \node[state] (s14) at ([xshift=1\ml]s13) {};
            \node[state] (s15) at ([xshift=1\ml]s14) {};
            \node[state] (s16) at ([xshift=1\ml]s15) {};
            \node[state] (s17) at ([xshift=1\ml]s16) {};
            \node[state] (s18) at ([yshift=-0.5\ml,xshift=0.8\ml]s17) {};
            \node[state] (s19) at ([yshift=-0.5\ml,xshift=0.8\ml]s7) {};
            \node[state] (s20) at ([xshift=1\ml]s19) {};
            \node[state] (s21) at ([xshift=1\ml]s20) {};
            \node[state] (s22) at ([xshift=1\ml]s21) {};
            \node[state] (s23) at ([xshift=1\ml]s22) {};
            \node[state] (s24) at ([xshift=1\ml]s23) {};
            \node[state] (s25) at ([xshift=1\ml]s24) {};

            \draw[->] (s1) -- node[auto=left] {$\binom{a}{\pol}$} (s2);
            \draw[->] (s2) -- node[auto=left] {$\binom{a}{\nel}$} (s3);
            \draw[->] (s3) --  (s4);
            \draw[->] (s4) -- node[auto=left] {$\binom{\ponet}{\pol}$} (s5);
            \draw[->] (s5) -- node[auto=left] {$\binom{\prvarl}{\prvar_{1}}$} (s6);
            \draw[->] (s6) --  (s7);
            \draw[->] (s7) -- (s11);
            \draw[->] (s11) -- node[auto=left] {$\binom{\ponet}{\pol}$} (s12);
            \draw[->] (s12) -- node[auto=left] {$\binom{\prvarl}{\prvar_{2}}$} (s13);
            \draw[->] (s13) -- node[auto=left] {$\binom{e}{*}$} (s14);
            \draw[->] (s14) -- node[auto=left] {$\binom{b}{*}$} (s15);
            \draw[->] (s15) -- node[auto=left] {$\binom{\ponet}{\pol}$} (s16);
            \draw[->] (s16) -- node[auto=left] {$\binom{\prvarl}{\prvar_{3}}$} (s17);
            \draw[->] (s17) -- (s18);
            \draw[->] (s3) --  (s8);
            \node at ([xshift=0.5\ml]s3) {$\binom{b}{*}$};
            \draw[->] (s8) -- node[auto=left] {$\binom{\ponet}{\nel}$} (s9);
            \draw[->] (s9) -- node[auto=left] {$\binom{\prvarl}{\prvar_{2}}$} (s10);
            \draw[->] (s10) -- (s7);
            \node at ([xshift=-0.5\ml]s7) {$\binom{e}{*}$};
            \draw[->] (s7) -- (s19);
            \node at ([xshift=0.5\ml]s7) {$\binom{b}{*}$};
            \draw[->] (s19) -- node[auto=left] {$\binom{\ponet}{\nel}$} (s20);
            \draw[->] (s20) -- node[auto=left] {$\binom{\prvarl}{\prvar_{1}}$} (s21);
            \draw[->] (s21) -- node[auto=left] {$\binom{e}{*}$} (s22);
            \draw[->] (s22) -- node[auto=left] {$\binom{b}{*}$} (s23);
            \draw[->] (s23) -- node[auto=left] {$\binom{\ponet}{\pol}$} (s24);
            \draw[->] (s24) -- node[auto=left] {$\binom{\prvarl}{\prvar_{4}}$} (s25);
            \draw[->] (s25) -- (s18);
            \node at ([xshift=-0.5\ml]s18) {$\binom{e}{*}$};
        \end{tikzpicture}
    \end{center}
The query uses $x_{1},
\ldots, x_{k}$ to remember the propositional atoms that are set to
true.
\begin{align}
    e_{\eval}[k] &:= a\bind{x_{\pol}} ( a\bind{x_{\nel}}(\\ \nonumber
    &(b(\ponet[x_{\pol}^{=}]\cdot \prvarl[x_{1}^{=} \lor \cdots \lor
x_{k}^{=}] + \ponet[x_{\nel}^{=}]\cdot \prvarl[x_{1}^{\ne} \land
\cdots \land x_{k}^{\ne}]) e)^{*} \enspace )) \enspace .
\end{align}
\begin{lemma}
  \label{lem:BoolFormulaSchema}
  Let $\phi$ be a Boolean formula over the propositional atoms
  $\prvar_{1}, \ldots, \prvar_{n}$ and $\val: \set{x_{1}, \ldots,
  x_{k}} \to \set{\prvar_{1}, \ldots, \prvar_{n}, *}$ be a valuation.
  The source of $G_{\phi}$ is connected to its sink by a data path
  in $L(e_{\eval}[k], \val)$ iff $\phi$ is satisfied by the truth
  assignment that sets exactly the propositions in $\set{\prvar_{1},
  \ldots, \prvar_{n}} \cap \range(\val)$ to true.
\end{lemma}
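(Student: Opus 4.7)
The plan is to prove the lemma by structural induction on the Boolean formula $\phi$. First observe that any run of $e_{\eval}[k]$ on a source-to-sink path of $G_{\phi}$ is forced to begin by consuming the two prefix edges $\binom{a}{\pol}\binom{a}{\nel}$ under the outer operators $a\bind{x_{\pol}}$ and $a\bind{x_{\nel}}$, thereby binding $x_{\pol} \to \pol$ and $x_{\nel} \to \nel$ once and for all. The remaining path traverses the series-parallel gadget for $\phi$ and must lie in the language of the inner $(b(\cdots)e)^{*}$ under the extended valuation $\val' := \val[x_{\pol} \to \pol, x_{\nel} \to \nel]$. Since $\pol$ and $\nel$ are distinct data values and all $\ponet$-edges of $G_{\phi}$ carry either $\pol$ or $\nel$, the two disjuncts inside the star body are mutually exclusive on any single atom gadget: the positive disjunct matches a $\binom{\ponet}{\pol}\binom{\prvarl}{\prvar_{j}}$ pair precisely when $\prvar_{j} \in \range(\val)$, while the negative disjunct symmetrically matches a $\binom{\ponet}{\nel}\binom{\prvarl}{\prvar_{j}}$ pair precisely when $\prvar_{j} \notin \range(\val)$.

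The main inductive claim is: for every subformula $\psi$ of $\phi$, there is a source-to-sink data path in $G_{\psi}$ belonging to $L((b(\cdots)e)^{*}, \val')$ iff $\psi$ is satisfied by the truth assignment that sets exactly the atoms in $\set{\prvar_{1}, \ldots, \prvar_{n}} \cap \range(\val)$ to true. The base cases are atoms: a positive occurrence of $\prvar_{j}$ forces use of the first disjunct, which succeeds in one star iteration iff $\prvar_{j} \in \range(\val)$, i.e., iff $\prvar_{j}$ is true in the assignment; a negative occurrence dually forces the second disjunct, which succeeds iff $\prvar_{j} \notin \range(\val)$, i.e., iff $\neg \prvar_{j}$ holds. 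For a conjunction $\psi_{1} \land \cdots \land \psi_{r}$, serial composition makes every source-to-sink path the concatenation of one path through each $G_{\psi_{j}}$; the whole path lies in the star iff each sub-path does, matching the inductive hypothesis. For a disjunction $\psi_{1} \lor \cdots \lor \psi_{r}$, parallel composition makes every source-to-sink path lie within some single $G_{\psi_{j}}$, yielding existential satisfaction directly from the induction hypothesis.

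Applying the claim to $\psi = \phi$ and prepending the forced prefix yields the lemma. The most delicate step is the factorization argument in the conjunction case: one must argue that a star-iteration cannot straddle two consecutive atom gadgets and that the factorization into iterations of $b(\cdots)e$ aligns with the sub-gadget boundaries. This follows because each iteration consumes exactly one $b$ at the start, one $\ponet$, one $\prvarl$, and one $e$ at the end, and these letters appear in $G_{\phi}$ only at the prescribed positions within atom gadgets, so iteration boundaries are forced to coincide with atom gadget boundaries, which in turn coincide with the serial-composition fusion points. Once this alignment is established the inductive cases are routine.
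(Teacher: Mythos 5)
Your proof is correct and follows essentially the same route as the paper's: a structural induction on $\phi$ after observing that the prefix edges force $x_{\pol}\to\pol$ and $x_{\nel}\to\nel$, with the atom gadgets handled by the mutual exclusivity of the two disjuncts. The paper dismisses the inductive cases as ``standard''; your explicit alignment argument (each star iteration is delimited by the $b$ and $e$ letters, which occur only at atom-gadget boundaries) is exactly the justification that makes the conjunction and disjunction cases go through, so there is no gap.
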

\begin{proof}[Proof idea]
  The two bindings in the beginning of $e_{\eval}[k]$ forces
  $x_{\pol}, x_{\nel}$ to contain $\pol, \nel$ respectively. A
  positively occurring propositional atom generates a data path of the
  form $\cdot \xrightarrow{\binom{b}{*}} \cdot
    \xrightarrow{\binom{\ponet}{\pol}} \cdot
    \xrightarrow{\binom{\prvarl}{\prvar_{j}}} \cdot
    \xrightarrow{\binom{e}{*}} \cdot$, which can only be in the
    language of the expression $b\cdot\ponet[x_{\pol}^{=}]\cdot
    \prvarl[x_{1}^{=} \lor \cdots \lor x_{k}^{=}]e$. This forces
    $\prvar_{j}$ to be contained in one of $x_{1}, \ldots,
    x_{k}$. Similar arguments works for negatively occurring atoms.
    Rest of the proof is by induction on the structure of the
    formula.
\end{proof}

\begin{theorem}
  \label{thm:evalLevel1NPHard}
  For queries in $E_{1}$, the evaluation problem is \NP{}-hard.
\end{theorem}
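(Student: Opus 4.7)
The approach is a polynomial-time reduction from Boolean satisfiability, using the data graph $G_{\phi}$ and the query $e_{\eval}[n]$ above as the core gadgets. Lemma~\ref{lem:BoolFormulaSchema} already supplies the semantic bridge: the source of $G_{\phi}$ reaches its sink in $L(e_{\eval}[n], \val)$ exactly when the truth assignment encoded by $\val$ satisfies $\phi$. What is missing is a way to turn the \emph{existential} quantifier over valuations into reachability in a single instance, since the query evaluation problem fixes the valuation as part of its input.

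The plan is to absorb the choice of $\val$ into the query itself, using the outermost binding operators as non-deterministic guesses. Given a SAT instance, first convert it to negation normal form $\phi$ over atoms $\prvar_{1}, \ldots, \prvar_{n}$ (linear blowup). Then prepend $n$ guessing gadgets to $G_{\phi}$: introduce fresh nodes $v_{0}, \ldots, v_{n}$, identify $v_{n}$ with the source of $G_{\phi}$, and for each $j = 1, \ldots, n$ put parallel $c$-edges from $v_{j-1}$ to $v_{j}$ carrying each data value in $\set{\prvar_{1}, \ldots, \prvar_{n}, *}$. Let $G'_{\phi}$ be the resulting data graph. Take as the query
\[
q := c\bind{x_{1}}\bigl(c\bind{x_{2}}(\cdots c\bind{x_{n}}(e_{\eval}[n]) \cdots)\bigr)
\]
with empty valuation, and ask whether $v_{0}$ reaches the sink of $G'_{\phi}$ by a data path in $L(q)$.

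The correctness argument I would then carry out is short: every accepting data path must cross exactly one $c$-edge in each gadget, thereby committing to a valuation $\val$ with $\val(x_{j}) \in \set{\prvar_{1}, \ldots, \prvar_{n}, *}$, and by Lemma~\ref{lem:BoolFormulaSchema} the remaining portion of the path inside $G_{\phi}$ exists iff $\val$ induces a satisfying assignment. Since every truth assignment is realized by some such $\val$ (take $\val(x_{j}) = \prvar_{j}$ when $\prvar_{j}$ is true, else $*$), $q$ admits a matching path iff $\phi$ is satisfiable. I would also verify that $q$ lies in $E_{1}$: the Kleene-starred body inside $e_{\eval}[n]$ is binding-free and hence in $F_{0}$, and wrapping an $E_{1}$ expression with a binding $a\bind{x}$ stays in $E_{1}$ by the grammar, with no star ever placed above a binding. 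Both $G'_{\phi}$ and $q$ are of size polynomial in $|\phi|$. The most delicate step, and the real conceptual content, is the design of the guessing gadget: without it the reduction would check only a single, externally supplied assignment, so the binding operators must be made to function as an existential quantifier over $\val$ via the non-determinism inherent in choosing a data path.
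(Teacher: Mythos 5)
Your proposal is correct and follows essentially the same route as the paper: prepend a chain of parallel guessing edges to $G_{\phi}$, wrap $e_{\eval}[n]$ in an outer block of bindings so that the choice of path through the prefix commits to a valuation, and invoke Lemma~\ref{lem:BoolFormulaSchema} for correctness. The only cosmetic differences are that the paper uses just two parallel edges at position $j$ (labeled $\binom{a}{\prvar_{j}}$ and $\binom{a}{*}$) rather than $n+1$ edges carrying all values, and reuses the letter $a$ where you introduce a fresh letter $c$.
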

\begin{proof}[Proof idea]
  To check if a Boolean formula $\phi$ is satisfiable, evaluate the
  query $a \bind{x_{1}} a \bind{x_{2}} \cdots a \bind{x_{n}}
  e_{\eval}[n]$ on the data graph $\cdot \xra{\binom{a}{\prvar_{1}/*}}
  \cdot \xra{\binom{a}{\prvar_{2}/*}}\cdots
  \xra{\binom{a}{\prvar_{n}/*}} \cdot - G_{\phi} \rightarrow \cdot$.
  Here, $\xra{\binom{a}{\prvar_{j}/*}}$ denotes two edges in
  parallel, one labeled with $\binom{a}{\prvar_{j}}$ and another with
  $\binom{a}{*}$.
\end{proof}

Evaluating queries in $E_{1}$ is \NP{}-complete, evaluating REWB
queries in general is \PSPACE{}-complete and evaluating queries in
$E_{i}$ is in $\S_{i}$. To prove a corresponding $\S_{i}$ lower bound,
one would need to simulate $\S_{i}$ computations using queries with
bounded depth of nesting of iterated bindings. However, this does not
seem to be possible. We take a closer look at this in the rest of the
paper. Finding the
exact complexity of evaluating queries in $E_{i}$ remains open.

We now extend our satisfiability-to-query evaluation schema to handle
Boolean quantifiers.
Let $\prvars=\set{\prvar_{1}, \ldots, \prvar_{n}}$ be a set of
propositional atoms. To handle existential Boolean quantifiers, we
build a new graph and a query. These gadgets build on earlier ideas to
bring out the difference in the role played by the data graph and the
query while reducing satisfiability to query evaluation. The new
graph $G[\exists k/\prvars]\circ G$, is as follows: $\cdot
\xra{\binom{a_{1}}{\prvar_{1}}} \cdot
\xra{\binom{a_{1}}{\prvar_{2}}}\cdots \xra{\binom{a_{1}}{\prvar_{n}}}
\cdot - G \rightarrow \cdot$. We assume that the letter $a_{1}$ is not
used inside $G$, which is equal to $G_{\phi}$ for some Boolean formula
$\phi$. The new query $e[\exists k]\circ e$ is defined as follows:
\begin{align}
e[\exists k]\circ e &:= a_{1}^{*} a_{1} \bind{x_{1}} a_{1}^{*} a_{1}\bind{x_{2}} a_{1}^{*}
\cdots a_{1}^{*} a_{1}\bind{x_{k}} a_{1}^{*} e \label{eq:expressionForExists}
\end{align}
where $e = e_{\eval}[k]$ for some $k \in \Nat$.

We now give a parameterized lower bound for evaluating $E_{1}$
queries. An instance of the \emph{weighted satisfiability} problem
consists of a Boolean formula (not necessarily in Conjunctive Normal
Form) and a parameter $k \in \Nat$. The goal
is to check if the formula is satisfied by a truth assignment  of
weight $k$. The class \wsat{} is the set of all parameterized problems
that are \fpt{}-reducible to the weighted satisfiability problem
(see \cite[Chapter 25]{DF2013}).
\begin{lemma}
    \label{lem:evalLevel1WSatHard}
    Let $\phi$ be a Boolean formula over the set $\prvars$ of
    propositions and $k \in \Nat$. We can construct in
    polynomial time a data graph $G$ and an REWB $e_{1}^{1}$
    satisfying the following conditions.
    \begin{enumerate}
        \item The source of $G$ is connected to its sink by a data
            path in $L(e_{1}^{1})$ iff $\phi$ has a satisfying
            assignment of weight $k$.
        \item The size of $e_{1}^{1}$ depends only on $k$.
    \end{enumerate}
\end{lemma}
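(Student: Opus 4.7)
The plan is to take $G := G[\exists k/\prvars] \circ G_{\phi}$ and $e_{1}^{1} := e[\exists k] \circ e_{\eval}[k]$ exactly as constructed just before the lemma, combining the existential-selection gadget of~(\ref{eq:expressionForExists}) with the evaluation gadget of Lemma~\ref{lem:BoolFormulaSchema}. First I would check the syntactic side: both $e[\exists k]$ and $e_{\eval}[k]$ have size $O(k)$ (the only $k$-dependent parts are the bindings themselves together with the condition $x_{1}^{=} \lor \cdots \lor x_{k}^{=}$ and its dual), and neither expression contains a binding inside a Kleene star, so both lie in $E_{1}$; since $E_{1}$ is closed under concatenation, $e_{1}^{1} \in E_{1}$. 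This already yields condition~(2) of the lemma.

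For condition~(1) I would analyse the action of $e[\exists k]$ on the prefix of $G$. The letter $a_{1}$ appears in $G$ only on the $n$ prefix edges $\binom{a_{1}}{\prvar_{1}}\cdots\binom{a_{1}}{\prvar_{n}}$, and any data path from the source of $G$ to its sink must traverse all of them in order before entering $G_{\phi}$. The shape $a_{1}^{*}a_{1}\bind{x_{1}}a_{1}^{*}\cdots a_{1}^{*}a_{1}\bind{x_{k}}a_{1}^{*}$ therefore forces the run to choose an increasing sequence $1 \le i_{1} < \cdots < i_{k} \le n$ of positions at which the bindings fire, giving a valuation $\val$ with $\val(x_{j}) = \prvar_{i_{j}}$; the remaining prefix edges are absorbed by the surrounding $a_{1}^{*}$s. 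Since the $\prvar_{i}$'s are pairwise distinct, $\prvars \cap \range(\val) = \set{\prvar_{i_{1}}, \ldots, \prvar_{i_{k}}}$ is always a $k$-element subset of $\prvars$, and conversely every $k$-subset of $\prvars$ arises from some such choice.

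After the prefix the remainder of the data path lies inside $G_{\phi}$ and the remainder of the query is $e_{\eval}[k]$ evaluated under $\val$, so Lemma~\ref{lem:BoolFormulaSchema} gives: the suffix exists iff $\phi$ is satisfied by the truth assignment setting exactly $\set{\prvar_{i_{1}}, \ldots, \prvar_{i_{k}}}$ to true. Read in both directions this produces a bijection between source-to-sink data paths in $L(e_{1}^{1})$ and weight-$k$ satisfying assignments of $\phi$, which gives condition~(1). The main obstacle is purely syntactic: to keep $e_{1}^{1}$ in $E_{1}$ and independent of $n$, I must arrange the $a_{1}^{*}$s \emph{beside} the bindings rather than around them, while still covering prefixes of unbounded length; this is precisely why $e[\exists k]$ interleaves its Kleene stars with, rather than encloses, the $k$ binding operators.
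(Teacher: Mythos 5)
Your proposal is correct and follows essentially the same route as the paper: the same gadgets $G[\exists k/\prvars]\circ G_{\phi}$ and $e[\exists k]\circ e_{\eval}[k]$, the same analysis of the $a_{1}$-prefix forcing an injective choice of a $k$-subset of $\prvars$ (which the paper packages via its ``indistinguishable variables'' lemma), and the same appeal to Lemma~\ref{lem:BoolFormulaSchema} for the suffix. The only quibble is that your closing claim of a \emph{bijection} between data paths and weight-$k$ satisfying assignments is stronger than needed and not literally true (several paths through $G_{\phi}$ can witness the same assignment); the two-directional implication you actually argue is what the lemma requires.
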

\begin{proof}[Proof idea]
    The required data graph is $G[\exists k/\prvars]\circ G_{\phi}$
    and $e_{1}^{1}$ is $e[\exists k]\circ e_{\eval}[k]$. 
    The data path $\cdot \xra{\binom{a_{1}}{\prvar_{1}}} \cdot
   \xra{\binom{a_{1}}{\prvar_{2}}}\cdots
   \xra{\binom{a_{1}}{\prvar_{n}}} \cdot$ in the graph $G[\exists
   k/\prvars]\circ G_{\phi}$ has to be in the language of
  $a_{1}^{*} a_{1} \bind{x_{1}} a_{1}^{*} a_{1}\bind{x_{2}} a_{1}^{*}
  \cdots a_{1}^{*} a_{1}\bind{x_{k}} a_{1}^{*}$. This induces a
  valuation $\val'$ which maps $\{x_1, \dots, x_k\}$ injectively into
  $\prvars$, denoting the $k$ propositions that are set to true.  With
  this the data path continues from the source of $G_\phi$ 
  to its sink. Rest of the proof follows from
  Lemma~\ref{lem:BoolFormulaSchema}. 
\end{proof}
\begin{theorem}
    \label{thm:evalLevel1WSatHard}
    Evaluating REWB queries in $E_{1}$, parameterized by the size of
    the query is hard for \wsat{} under \fpt{} reductions.
\end{theorem}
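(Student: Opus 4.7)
The plan is to exhibit an \fpt{}-reduction from weighted satisfiability, which is complete for \wsat{} under \fpt{}-reductions (see \cite[Chapter 25]{DF2013}), to the parameterized query evaluation problem for $E_{1}$. Lemma~\ref{lem:evalLevel1WSatHard} already does essentially all the work; the remaining task is to verify that the reduction provided by that lemma satisfies each of the three conditions in Definition~\ref{def:fptReduction}.

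More concretely, given an instance $(\phi,k)$ of weighted satisfiability, I would apply Lemma~\ref{lem:evalLevel1WSatHard} to obtain in polynomial time a data graph $G = G[\exists k/\prvars]\circ G_{\phi}$, an expression $e_{1}^{1} = e[\exists k]\circ e_{\eval}[k]$ in $E_{1}$, and the distinguished pair $\struct{u,v}$ consisting of the source and sink of $G$. The image instance of query evaluation is then $\bigl((G, e_{1}^{1}, \struct{u,v}),\, |e_{1}^{1}|\bigr)$, whose parameter is the size of the query.

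To check the three conditions: (i) correctness of the reduction is precisely the first condition of Lemma~\ref{lem:evalLevel1WSatHard}, which says that $u$ is connected to $v$ by a data path in $L(e_{1}^{1})$ iff $\phi$ has a satisfying assignment of weight $k$; (ii) the parameter bound $|e_{1}^{1}|\le g(k)$ for some computable $g$ follows from the second condition of the lemma, since the expression $e[\exists k]\circ e_{\eval}[k]$ is built only from $k$ copies of the binding pattern $a_{1}\bind{x_{j}}$ interleaved with $a_{1}^{*}$ together with the fixed shape of $e_{\eval}[k]$, whose size is linear in $k$; and (iii) the time bound holds because the whole construction of both $G$ and $e_{1}^{1}$ runs in polynomial time in the size of $(\phi,k)$, which is in particular of the form $f(k)|\phi|^{c}$.

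No step here is technically difficult: the serious ideas already live in the gadgets of Lemma~\ref{lem:BoolFormulaSchema} and Lemma~\ref{lem:evalLevel1WSatHard}. The only point deserving care, and thus the main obstacle I foresee, is confirming that the construction's query depends only on the parameter $k$ rather than on $n = |\prvars|$ or $|\phi|$; this must be checked by inspecting the explicit shape of $e[\exists k]$ and $e_{\eval}[k]$ and noting that the alphabet symbols they use are all fixed (the atoms $\prvar_{j}$ appear only as data values inside the graph, not as alphabet letters in the query), so the query's description length is indeed a function of $k$ alone. With this in hand, conditions (i)--(iii) are immediate, completing the \fpt{}-reduction.
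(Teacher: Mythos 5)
Your proposal is correct and follows essentially the same route as the paper: the paper's proof is a one-line observation that the construction of Lemma~\ref{lem:evalLevel1WSatHard} is an \fpt{} reduction from weighted satisfiability, and you simply make explicit the verification of the three conditions of Definition~\ref{def:fptReduction}, including the (correct) observation that the query size depends only on $k$ because the atoms $\prvar_{j}$ enter only as data values in the graph. Nothing is missing.
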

\begin{proof}
    The reduction given in Lemma~\ref{lem:evalLevel1WSatHard} is a
    \fpt{} reduction from the weighted satisfiability problem to the
    problem of evaluating $E_{1}$ queries , parameterized by
    the size of the query.
\end{proof}

Finally we extend our gadgets to handle universal Boolean quantifiers.
These gadgets build upon the previous ideas and bring out the role of
nested iterated bindings when satisfiability is reduced to query
evaluation. We would first like to check if the source of some graph $G$ is connected to
its sink by a data path in the language of some REWB $e$, for every
possible injective valuation $\val: \set{x_{1}, \ldots, x_{k}} \to
\prvars$. We will now design some data graphs and expressions to
achieve this. Let $\sk$ be a letter not used in $G$. The data graphs
$G_{0}, \ldots, G_{k}$ are as shown in
Figure~\ref{fig:UniversalGadget}.
\begin{figure}[!h]
    \begin{center}
    \begin{tikzpicture}[>=stealth]
        \begin{scope}
            \node[state] (sop) at (0\ml,0\ml) {};
            \node[state] (sip) at ([yshift=-3\ml]sop) {};
            \node at ([xshift=-0.9\ml]sop) {source};
            \node at ([xshift=-0.7\ml]sip) {sink};
            \node (g) at (barycentric cs:sop=1,sip=1) {$G$};
            \node[rectangle, draw=black, fit=(sop) (g) (sip), inner sep=0.05\ml] {};

            \draw[->] (sop) edge[bend right] node[auto=right] {$\binom{\sk}{\prvar_{1}}$} (sip);
            \draw[->] (sop) .. controls ([xshift=-4\ml]barycentric cs:sop=1,sip=1) .. node[auto=right] {$\binom{\sk}{\prvar_{n}}$} (sip);
            \draw[dotted] ([xshift=-1.5\ml]barycentric cs:sop=1,sip=1) -- ([xshift=-3\ml]barycentric cs:sop=1,sip=1);
        \end{scope}
        \begin{scope}[xshift=1.8\ml, yshift=-1.6\ml, rotate=90]
            \node[state] (a1) at (0\ml,0\ml) {};
            \node[state] (a2) at ([yshift=-1.2\ml]a1) {};
            \node[state] (a3) at ([yshift=-1.7\ml]a2) {};
            \node[state] (sop) at ([xshift=1.2\ml]barycentric cs:a1=1,a3=1) {};
            \node[state] (b1) at ([yshift=-1.5\ml]a3) {};
            \node[state] (b2) at ([yshift=-1\ml]b1) {};
            \node[state] (b3) at ([yshift=-1.7\ml]b2) {};
            \node[state] (sip) at ([xshift=1.2\ml]barycentric cs:b1=1,b3=1) {};
            \node (gi) at (barycentric cs:sop=1,sip=1) {$G_{i-1}$};
            \draw ([xshift=0.3\ml, yshift=0.3\ml]sop.center) rectangle ([xshift=-0.3\ml, yshift=-0.3\ml]sip.center);
            \node[state, label=-90:source] (b) at ([xshift=-1.5\ml]a1) {};
            \node[state, label=-90:sink] (e) at ([xshift=-1.5\ml]b1) {};

            \draw[->, rounded corners] (a1) -- node[auto=left] {$\binom{a_{i}}{\prvar_{1}}$} ([xshift=1\ml]a1.center) -- (sop);
            \draw[->, rounded corners] (a2) -- node[auto=left] {$\binom{a_{i}}{\prvar_{2}}$} ([xshift=1\ml]a2.center) -- (sop);
            \draw[dotted] (a2) -- (a3);
            \draw[->, rounded corners] (a3) -- node[auto=left] {$\binom{a_{i}}{\prvar_{n}}$} ([xshift=1\ml]a3.center) -- (sop);

            \draw[->, rounded corners] (sip) -- ([xshift=1\ml]b1.center) -- node[auto=right] {$\binom{a_{i}}{\prvar_{n}}$} (b1);
            \draw[->, rounded corners] (sip) -- ([xshift=1\ml]b2.center) -- node[auto=left] {$\binom{a_{i}}{\prvar_{n-1}}$} (b2);
            \draw[dotted] (b2) -- (b3);
            \draw[->, rounded corners] (sip) -- ([xshift=1\ml]b3.center) -- node[auto=left] {$\binom{a_{i}}{\prvar_{1}}$} (b3);

            \draw[->] (b) -- node[auto=right] {$b_{i}$} (a1);
            \draw[->] (b1) -- node[auto=left, pos=0.4] {$c_{i}$} (e);
            
            \draw[->] (b2) edge[bend left] node[auto=left, pos=0.6] {$c_{i}$} (a3);
            \draw[->] (b3) edge[bend left] node[auto=left, pos=0.6] {$c_{i}$} (a2);
        \end{scope}
    \end{tikzpicture}
    \end{center}
    \caption{Data graphs $G_{0}$ (left) and $G_{i}$ (right)}
    \label{fig:UniversalGadget}
\end{figure}
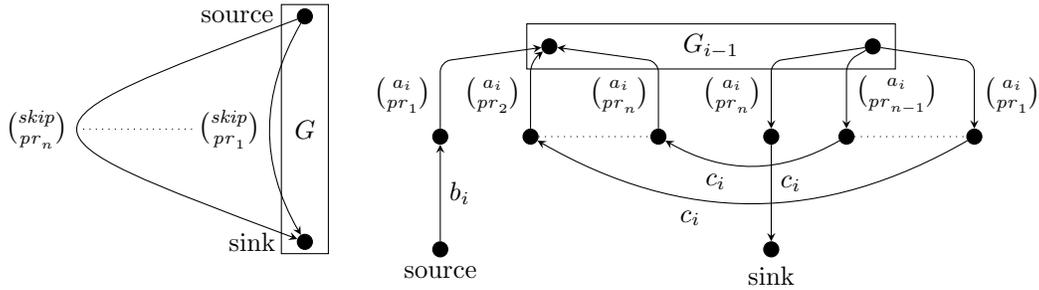
The expressions $e^{0}, \ldots, e^{k}$ are as follows.
\begin{align}
    e^{0} &:= e + \bigplus_{1 \le i < j \le k}\sk[x_{i}^{=} \land
    x_{j}^{=}] &
 e^{i} &:=
b_{i}(a_{i}\bind{x_{i}}(e^{i-1}a_{i}[x_{i}^{=}])c_{i})^{*}
    \label{eq:queryForUniversalVars}
\end{align}
The graph $G_{0}$ and the expression $e^{0}$ are designed to ensure
that the source of $G$ is connected to its sink by a path in
$L(e,\val)$, unless $\val$ is not injective, in which case
$G$ can be bypassed by one of the edges labeled
$\binom{\sk}{\prvar_{j}}$ introduced in $G_{0}$. The graph
$G_{i}$ and the expression $e_{i}$ are designed to ensure that any
path from the source of $G_{i}$ to its sink has to go through
$G_{i-1}$ multiple times, once for each $\prvar_{j}$ stored in the
variable $x_{i}$.
The nesting depth of iterated bindings in the expression $e^{i}$ is
one more than that of $e^{i-1}$.

Suppose $\val$ is a partial valuation of some variables,
whose domain does not intersect with $\set{x_{1}, \ldots, x_{k}}$. We
denote by $\val[\set{x_{1}, \ldots, x_{k}} \to \prvars]$ the set of
valuations $\val'$ that extend $\val$ such that
$\mathrm{domain}(\val') = \mathrm{domain}(\val) \cup \set{x_{1},
\ldots, x_{k}}$ and $\set{\val'(x_{1}), \ldots, \val'(x_{k})}
\subseteq \prvars$. We additionally require that $\val'$ is
injective on $\set{x_{1}, \ldots, x_{k}}$ when we write
$\val[\set{x_{1}, \ldots, x_{k}} \xra{1:1} \prvars]$.
\begin{lemma}
    \label{lem:universalGadgetInduction}
    Let $i \in \set{1,\ldots, k}$ and $\val_{i}$ be a valuation for
    $\fv(e^{i}) \setminus \set{x_{1}, \ldots, x_{i}}$. The source of
    $G_{i}$ is connected to its sink by a data path in $L(e^{i},
    \val_{i})$ iff for every $\val \in \val_{i}[\set{x_{1}, \ldots,
    x_{i}} \to \prvars]$, there is a data path in $L(e^{0},\val)$
    connecting the source of $G_{0}$ to its sink.
\end{lemma}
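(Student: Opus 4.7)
I would proceed by induction on $i$, with the main technical work being a precise \emph{gadget analysis} of $G_{i}$ matched against $e^{i}$. The gadget claim I would establish is: a data path from the source of $G_{i}$ to its sink lies in $L(e^{i}, \val_{i})$ if and only if, for every $\prvar_{j} \in \prvars$, there is a data path from the source to the sink of $G_{i-1}$ lying in $L(e^{i-1}, \val_{i}[x_{i} \to \prvar_{j}])$. The forward direction uses that the only $b_{i}$-edge out of the source of $G_{i}$ leads to vertex $a_{1}$ and the only $c_{i}$-edge reaching the sink comes out of vertex $b_{1}$; combined with the wiring of the $c_{i}$-edges from $b_{n+1-l}$ to $a_{l+1}$, this forces exactly $n$ iterations of the outer star, where the $j$-th iteration binds $x_{i}$ to the distinct value $\prvar_{j}$ (in the order $\prvar_{1}, \ldots, \prvar_{n}$) and traverses $G_{i-1}$ under the valuation $\val_{i}[x_{i} \to \prvar_{j}]$. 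The reverse direction simply glues $n$ such traversals together using the $b_{i}$, $a_{i}$ and $c_{i}$ edges of $G_{i}$.

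Given this gadget claim, the base case $i=1$ is immediate: the claim rewrites the left-hand side as ``for every $\prvar_{j} \in \prvars$, a path in $L(e^{0}, \val_{1}[x_{1} \to \prvar_{j}])$ exists in $G_{0}$'', which is exactly the statement that for every $\val \in \val_{1}[\set{x_{1}} \to \prvars]$ such a path exists. For the inductive step, the gadget claim reduces the problem at level $i$ to $n$ independent problems at level $i-1$, one per $\prvar_{j}$; the induction hypothesis applied with valuation $\val_{i}[x_{i} \to \prvar_{j}]$ replaces each such sub-problem by a statement quantified over $\val \in (\val_{i}[x_{i} \to \prvar_{j}])[\set{x_{1}, \ldots, x_{i-1}} \to \prvars]$. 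The union over all $\prvar_{j}$ of these valuation sets is precisely $\val_{i}[\set{x_{1}, \ldots, x_{i}} \to \prvars]$, yielding the desired equivalence.

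The hard part will be the gadget claim itself, specifically verifying that the wiring of $G_{i}$ genuinely rules out any path that skips some $\prvar_{j}$, revisits one, or exits prematurely. Two auxiliary facts make this work and should be stated up front: (i) the letters $b_{i}$ and $c_{i}$ do not occur inside $G_{i-1}$, so they cannot be absorbed by the $e^{i-1}$ sub-expression in any iteration; and (ii) each $a_{i}$-labeled edge in $G_{i}$ carries a distinct $\prvar$ data value, so the binding of $x_{i}$ uniquely determines both the entry vertex $a_{l}$ used at the start of an iteration and the exit vertex $b_{m}$ at its end. A smaller but worth-noting subtlety is that while $e^{i-1}$ is being matched against $G_{i-1}$, the outer expression $e^{i}$ places no constraint on $x_{1}, \ldots, x_{i-1}$; these variables are bound and used entirely inside $e^{i-1}$, so the induction hypothesis applies cleanly with $\val_{i}[x_{i} \to \prvar_{j}]$ as the ``outer'' valuation.
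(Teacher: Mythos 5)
Your proposal is correct and follows essentially the same route as the paper: induction on $i$, with the key step being the wiring analysis of $G_{i}$ against the star in $e^{i}$ that forces exactly $n$ traversals of $G_{i-1}$, one per $\prvar_{j}$ in order, under the valuations $\val_{i}[x_{i}\to\prvar_{j}]$. The paper carries out this gadget analysis explicitly only for the base case and declares the induction step "similar," whereas you package it once as a level-independent claim; the content is the same.
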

\begin{proof}[Proof idea]
  The data path has to begin with $b_{i}
  \binom{a_{i}}{\prvar_{1}}$ in the language of
  $b_{i}a_{i}\bind{x_{i}}$, forcing $x_{i}$ to store
  $\prvar_{1}$. Then the path has to traverse $G_{i-1}$ using
  $e^{i-1}$. At the sink of $G_{i-1}$, the path is forced to take
  $\binom{a_{i}}{\prvar_{1}} c_{i}$ to satisfy the condition in
  $a_{i}[x_{i}^{=}] c_{i}$. This forces the path to start again in
  $\binom{a_{i}}{\prvar_{2}}$ and so on.
\end{proof}
We write $G[\forall k/\prvars]\circ G$ and $e[\forall k]\circ e$ to
denote the graph $G_{k}$ and REWB $e^{k}$ constructed above. We
implicitly assume that the variables $x_{1}, \ldots, x_{k}$ are not
bound inside $e$. We can always rename variables to ensure this. If
$e$ is in $E_i$, then $e[\forall k]\circ e$ is in $F_{i+k-1}$.
\begin{lemma}
    \label{lem:universalGadgetCoversAll}
    Let $\val$ be a valuation for $\fv(e) \setminus \set{x_{1},
    \ldots, x_{k}}$ for some REWB $e$. The source of $G[\forall
    k/\prvars]\circ G$ is connected to its sink by a data path in
    $L(e[\forall k]\circ e, \val)$ iff for all
    $\val' \in \val[\set{x_{1}, \ldots, x_{k}} \xra{1:1} \prvars]$,
    the source of $G$ is connected to its sink by a data path in
    $L(e,\val')$.
\end{lemma}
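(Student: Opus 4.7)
The plan is to invoke Lemma~\ref{lem:universalGadgetInduction} at $i = k$ with $\val_{k} := \val$, and then close the argument at the base level by inspecting $G_{0}$ and $e^{0}$. First, observe that $\fv(e^{k}) \setminus \set{x_{1}, \ldots, x_{k}} = \fv(e) \setminus \set{x_{1}, \ldots, x_{k}}$, because the variables $x_{1}, \ldots, x_{k}$ are successively bound as $e^{1}, \ldots, e^{k}$ are built around $e^{0}$, while the additional free variables of $e^{0}$ compared to $e$ are exactly $\set{x_{1}, \ldots, x_{k}}$. Hence $\val$ is a legitimate input for the lemma, which gives: a data path in $L(e^{k}, \val)$ connects the source of $G_{k}$ to its sink iff, for every $\val'' \in \val[\set{x_{1}, \ldots, x_{k}} \to \prvars]$, a data path in $L(e^{0}, \val'')$ connects the source of $G_{0}$ to its sink.

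The key step is to show that, for any fixed $\val''$, existence of such a path in $G_{0}$ is equivalent to: either $\val''$ is non-injective on $\set{x_{1}, \ldots, x_{k}}$, or the source of $G$ is connected to its sink by a data path in $L(e, \val'')$. Since $e^{0} = e + \bigplus_{1 \le i < j \le k}\sk[x_{i}^{=} \land x_{j}^{=}]$, the realizing path must witness one of the disjuncts. A witness of $\sk[x_{i}^{=} \land x_{j}^{=}]$ is a single edge labeled $\binom{\sk}{d}$; the only such edges in $G_{0}$ are the shortcut edges $\binom{\sk}{\prvar_{\ell}}$, and the condition forces $\val''(x_{i}) = \val''(x_{j}) = \prvar_{\ell}$, which occurs exactly when $\val''$ is non-injective. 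A witness of the $e$ disjunct has its label word in $L(e, \val'')$; because $\sk$ does not appear in $G$ and (without loss of generality, after renaming) does not appear in $e$, such a path must remain entirely inside the $G$ part of $G_{0}$, and thus exhibits a source-to-sink path through $G$ in $L(e, \val'')$. Both directions of the equivalence are straightforward to reverse.

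Combining these observations, the condition produced by Lemma~\ref{lem:universalGadgetInduction} becomes: for every $\val'' \in \val[\set{x_{1}, \ldots, x_{k}} \to \prvars]$, $\val''$ is either non-injective on $\set{x_{1}, \ldots, x_{k}}$ or admits a path in $L(e, \val'')$ through $G$. Non-injective valuations are automatically satisfied by the shortcut edges, so the condition reduces to: for every injective $\val' \in \val[\set{x_{1}, \ldots, x_{k}} \xra{1:1} \prvars]$, the source of $G$ is connected to its sink by a data path in $L(e, \val')$. This is exactly the claim.

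The main obstacle I expect is the case analysis on how a path realizing $e^{0}$ uses $G_{0}$ — in particular, ruling out that a witness of the $e$ disjunct could secretly exploit a shortcut edge. This hinges on the freshness of the letter $\sk$ in both $G$ and $e$. Once that is settled, the rest is a bookkeeping check that Lemma~\ref{lem:universalGadgetInduction}'s quantifier over all valuations $\val''$ collapses, modulo the shortcut edges, to the quantifier over injective $\val'$ in the statement.
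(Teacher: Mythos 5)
Your proposal is correct and follows essentially the same route as the paper: apply Lemma~\ref{lem:universalGadgetInduction} at the top level, then observe that in $G_{0}$ the $\sk$ shortcut edges (usable only under non-injective valuations, given that $\sk$ is fresh) absorb exactly the non-injective extensions, while injective extensions force a source-to-sink path through $G$ in $L(e,\val')$. The only cosmetic difference is that you package this as a pointwise equivalence for each extension $\val''$ and then collapse the quantifier, whereas the paper argues the two directions of the iff separately.
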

\begin{proof}[Proof idea]
  Lemma~\ref{lem:universalGadgetInduction} ensures that there is a
  path $w_{\val'}$ in $L(e^{0}, \val')$ connecting the source of
  $G_{0}$ to its sink for every valuation $\val' \in \val[\set{x_{1},
  \ldots, x_{k}} \to \prvars]$. From Figure~\ref{fig:UniversalGadget},
  $w_{\val'}$ can either be a $\sk$ edge, or a path through $G$. By
  definition, $e^{0}$ allows a skip edge to be taken only when two
  variables among $x_{1}, \ldots, x_{k}$ have the same data value.
  Hence for valuations $\val'$ that are injective on $\set{x_{1},
  \ldots, x_{k}}$, $w_{\val'}$ is in $L(e,\val')$.
\end{proof}

If $\phi$ is a partially quantified Boolean formula with the
propositional atoms in $\prvars$ occurring freely, we write
$\exists^{k}\prvars~\phi$ to denote that atoms in $\prvars$ are
existentially quantified with the constraint that exactly $k$ of them
should be set to true. We write $\forall^{k}\prvars~\phi$ to denote
that atoms in $\prvars$ are universally quantified and that only those
assignments that set exactly $k$ of the atoms to true are to be
considered. An instance of the \emph{weighted quantified
satisfiability} problem consists of a Boolean formula $\phi$ over the
set $\prvars$ of propositional atoms, a partition $\prvars_{1},
\ldots, \prvars_{\ell}$ of $\prvars$ and numbers $k_{1}, \ldots,
k_{\ell}$.  The goal is to check if $(\exists^{k_{1}}\prvars_{1}
\forall^{k_{2}}\prvars_{2} \cdots \phi)$ is true.
\begin{lemma}
    \label{lem:evalAWSatHard}
    Given an instance of the weighted quantified satisfiability
    problem, We can construct in polynomial time a data graph $G$ and an
    REWB $e_{1 + k_{2} + k_{4} +\cdots}^{1}$ satisfying the following conditions.
    \begin{enumerate}
        \item The source of $G$ is connected to its sink by a data
            path in $L(e_{1 + k_{2} + k_{4} +\cdots}^{1})$ iff the
            given instance of the weighted quantified satisfiability
            problem is a yes instance.
        \item The size of $e_{1 + k_{2} + k_{4} +\cdots}^{1}$ depends
            only on $k_{1}, \ldots, k_{\ell}$.
    \end{enumerate}
\end{lemma}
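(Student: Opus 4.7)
The plan is to build $G$ and $e$ by iterating the existential and universal gadgets constructed earlier in an inside-out fashion that mirrors the quantifier prefix. Start from the pair $(G_0, e_0) := (G_\phi, e_{\eval}[k])$ where $k = k_1 + \cdots + k_\ell$; by Lemma~\ref{lem:BoolFormulaSchema} this pair represents $\phi$, and by inspection $e_0 \in E_1$. For $i$ running from $\ell$ down to $1$, set $(G_i, e_i) := (G[\exists k_i/\prvars_i]\circ G_{i-1},\, e[\exists k_i]\circ e_{i-1})$ if $\prvars_i$ is existentially quantified, and $(G_i, e_i) := (G[\forall k_i/\prvars_i]\circ G_{i-1},\, e[\forall k_i]\circ e_{i-1})$ otherwise, using a fresh supply of letters and variables in each gadget layer so that the bindings and anchor letters of an outer layer cannot match edges inside an inner one. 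The required pair is $(G_\ell, e_\ell)$.

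Correctness will be by induction on $i$, with the base case $i = 0$ handled by Lemma~\ref{lem:BoolFormulaSchema}. The universal step is precisely Lemma~\ref{lem:universalGadgetCoversAll}, which states that source--sink connectivity in $G_i$ via $L(e_i)$ holds iff, for every injective valuation of $x_1, \ldots, x_{k_i}$ into $\prvars_i$, the source and sink of $G_{i-1}$ are connected via $L(e_{i-1})$ under that valuation --- exactly the semantics of $\forall^{k_i}\prvars_i$. For the existential step I will replay the argument of Lemma~\ref{lem:evalLevel1WSatHard}: a data path through $G_i$ in $L(e_i)$ must begin by traversing the prepended $a_1$-chain, which forces the top-level bindings to record an injective weight-$k_i$ assignment of propositions from $\prvars_i$ to \emph{true}, after which the path must continue inside $G_{i-1}$ under $L(e_{i-1})$ with that valuation; by the inductive hypothesis this realises $\exists^{k_i}\prvars_i$.

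For the level bound I will walk through the alternating prefix from innermost to outermost, using two facts: an existential wrapping only adds top-level (non-iterated) bindings and $a_1^*$ factors, so it preserves membership in $E_j$ and sends $F_j$ into $E_{j+1}$; and a universal wrapping with $k_i$ variables lifts $E_j$ to $F_{j+k_i-1} \subseteq E_{j+k_i}$, as recorded in the excerpt just before Lemma~\ref{lem:universalGadgetCoversAll}. Starting from $e_0 \in E_1$, the innermost $\exists^{k_\ell}$ keeps us in $E_1$, then $\forall^{k_{\ell-1}}$ sends us to $F_{k_{\ell-1}} \subseteq E_{k_{\ell-1}+1}$, then $\exists^{k_{\ell-2}}$ keeps us there, then $\forall^{k_{\ell-3}}$ sends us to $F_{k_{\ell-1}+k_{\ell-3}} \subseteq E_{k_{\ell-1}+k_{\ell-3}+1}$, and so on, landing after the outermost $\exists^{k_1}$ in $E_{1+k_2+k_4+\cdots}$, as claimed. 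Each gadget layer contributes only $O(k_i^2)$ symbols to the query and nothing that scales with $|\prvars|$ or $|\phi|$, so $|e_\ell|$ depends only on $k_1, \ldots, k_\ell$; the graph $G_\ell$ is visibly polynomial-time constructible.

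The main obstacle will be the layer-interference issue alluded to above: the correctness arguments for Lemmas~\ref{lem:evalLevel1WSatHard} and \ref{lem:universalGadgetCoversAll} tacitly assume that the gadget's special letters and variables do not occur inside the wrapped expression, so when nesting gadgets one must use alphabets that are disjoint across layers. The freshness convention handles this automatically at the cost of $O(\ell)$ extra letters and $O(k)$ extra variables, which is harmless. A secondary technical point is that Lemma~\ref{lem:evalLevel1WSatHard} is stated for the specific inner expression $e_{\eval}[k]$, whereas the induction needs it applied to an arbitrary previously-built $e_{i-1}$; inspecting the proof of that lemma shows the argument is valid verbatim for any inner REWB disjoint from the gadget's alphabet, and I will state and invoke this mild generalisation explicitly.
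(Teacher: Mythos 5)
Your proposal is correct and takes essentially the same route as the paper: the same composed gadgets $G[\exists k_{1}/\prvars_{1}]\circ G[\forall k_{2}/\prvars_{2}]\circ\cdots\circ G_{\phi}$ and $e[\exists k_{1}]\circ e[\forall k_{2}]\circ\cdots\circ e_{\eval}[k_{1}+\cdots+k_{\ell}]$, with correctness peeled off one quantifier block at a time via Lemma~\ref{lem:BoolFormulaSchema}, Lemma~\ref{lem:universalGadgetCoversAll}, and a general form of the existential-gadget lemma (the paper's appendix indeed states that lemma for an arbitrary inner REWB with indistinguishable variables, which is exactly the mild generalisation you flag). Your explicit level bookkeeping and the fresh-letters-per-layer convention are details the paper leaves implicit, but the argument is the same.
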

\begin{proof}[Proof idea]
    The required data graph $G$ is $G[\exists k_{1}/\prvars_{1}] \circ
    G[\forall k_{2}/\prvars_{2}] \circ \cdots \circ G_{\phi}$ and the
    required REWB $e_{1 + k_{2} + k_{4} +\cdots}^{1}$ is $e[\exists
    k_{1}] \circ e[\forall k_{2}] \circ \cdots \circ e_{\eval}[k_{1} +
    \cdots + k_{\ell}]$. We assume that $\circ$ associates to the
    right, so $G_{1} \circ G_{2} \circ G_{3}$ is $G_{1} \circ (G_{2}
    \circ G_{3})$ and $e^{1} \circ e^{2} \circ e^{3}$ is $e^{1}
    \circ(e^{2} \circ e^{3})$. Correctness follows from
    Lemma~\ref{lem:universalGadgetCoversAll} and
    Lemma~\ref{lem:BoolFormulaSchema}.
\end{proof}

The weighted quantified satisfiability problem is parameterized by
$\ell + k_{1} + \cdots + k_{\ell}$. The class \awsat{} is the set of
parameterized problems that are \fpt{}-reducible to the weighted
quantified satisfiability problem (see \cite[Chapter 26]{DF2013}).
\begin{theorem}
    \label{thm:evalAWSatHard}
    Evaluating REWB queries, parameterized by the size of
    the query is hard for \awsat{} under \fpt{} reductions.
\end{theorem}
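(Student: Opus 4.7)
The plan is to observe that Lemma~\ref{lem:evalAWSatHard} already does essentially all of the work, and what remains is to package it as a formal \fpt{} reduction by checking the three conditions of Definition~\ref{def:fptReduction}. Given an arbitrary instance of weighted quantified satisfiability, namely a formula $\phi$ over $\prvars$ with partition $\prvars_{1}, \ldots, \prvars_{\ell}$ and weights $k_{1}, \ldots, k_{\ell}$, whose parameter is $K := \ell + k_{1} + \cdots + k_{\ell}$, I would invoke Lemma~\ref{lem:evalAWSatHard} to produce, in polynomial time, a data graph $G$ together with an REWB $e$ of size depending only on $k_{1}, \ldots, k_{\ell}$. The output instance of the query evaluation problem is then the pair $(G, e)$ (together with the source and sink of $G$), with the size of $e$ as the parameter.

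Correctness (condition~1 of Definition~\ref{def:fptReduction}) is immediate from item~1 of Lemma~\ref{lem:evalAWSatHard}: the constructed query accepts a data path from source to sink in $G$ if and only if the given quantified Boolean instance is a yes-instance. For the parameter bound (condition~2), item~2 of Lemma~\ref{lem:evalAWSatHard} states that $|e|$ depends only on $k_{1}, \ldots, k_{\ell}$; since there are only finitely many tuples $(\ell, k_{1}, \ldots, k_{\ell})$ with $\ell + k_{1} + \cdots + k_{\ell} \le K$, the maximum of $|e|$ over such tuples defines a computable function $g(K)$ that bounds the new parameter. For \fpt{}-computability (condition~3), Lemma~\ref{lem:evalAWSatHard} guarantees polynomial-time construction, which is a fortiori of the form $f(K)|x|^{c}$.

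Since weighted quantified satisfiability is \awsat{}-complete under \fpt{} reductions, the composition with any \fpt{} reduction from an \awsat{} problem yields an \fpt{} reduction to REWB query evaluation parameterized by the query size. I expect no genuine obstacle here: the technical content (the gadgets $G[\exists k_{i}/\prvars_{i}]$, $G[\forall k_{i}/\prvars_{i}]$ and their corresponding expressions, along with the inductive arguments justifying them) has already been discharged in Lemmas~\ref{lem:BoolFormulaSchema}, \ref{lem:universalGadgetCoversAll}, and~\ref{lem:evalAWSatHard}. The only thing to be careful about is making explicit that a bound of $|e|$ in terms of $(k_{1}, \ldots, k_{\ell})$ indeed yields a bound in terms of the aggregate parameter $K$, which is the verification sketched above.
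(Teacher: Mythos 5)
Your proposal is correct and matches the paper's own proof, which likewise just observes that the construction of Lemma~\ref{lem:evalAWSatHard} is itself an \fpt{} reduction from weighted quantified satisfiability. Your explicit check of the three conditions of Definition~\ref{def:fptReduction}, including the point that a bound depending on $(k_{1},\ldots,k_{\ell})$ yields a computable bound in the aggregate parameter $\ell + k_{1} + \cdots + k_{\ell}$, is a welcome elaboration of what the paper leaves implicit.
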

\begin{proof}
    The reduction given in Lemma~\ref{lem:evalAWSatHard} is a \fpt{}
reduction from the weighted quantified satisfiability problem to the
problem of evaluating REWB queries, with query size as the parameter.
\end{proof}


\section{Summary and Open Problems}
We have proved that increasing the depth of nesting of iterated
bindings in REWBs increase expressiveness. Given an REWB, it is
undecidable to check if its language can be defined with another REWB
with smaller depth of nesting of iterated bindings. The complexity of
query evaluation problems are summarized in the following table,
followed by a list of technical challenges to be overcome for closing
the gaps.
\begin{center}
\begin{tabular}[h]{lll}
  \toprule
  Query level & Evaluation & Parameterized complexity,
  query size is parameter\\
  \midrule
  $E_{1}$ & \NP{}-complete & (?2)\wsat{} lower bound, \wop{} upper bound\\
  $E_{i}$, $i > 1$ & (?1), $\S_{i}$ upper bound & (?3)\\
  Unbounded & \PSPACE{}-complete \cite{LTV2013} & (?4)\awsat{} lower
  bound, uniform-\xnl{} upper bound\\
  \bottomrule
\end{tabular}
\end{center}
\begin{enumerate}
    \item Suppose we want to check the satisfiability of a
        $\S_{2}$ Boolean
    formula over $(n_{e} + n_{u})$ propositional atoms of which the
    first $n_{e}$ atoms are existentially quantified and the last
    $n_{u}$ are universally quantified. With currently known
    techniques, reducing this to query evaluation results in an REWB
    in $E_{(n_{u}+1)}$. Hence, with bounded nesting depth, we cannot
    even prove a $\S_{2}$ lower bound.
  \item Weighted formula satisfiability, complete for \wsat{}, can be
    simulated with series-parallel graphs. Queries in $E_{1}$ do not
    seem to be powerful enough for weighted circuits.
  \item Without parameterization, the $\S_{i}$ upper bound is obtained
    by an oracle hierarchy of \NP{} machines. With parameterization,
    an oracle hierarchy of \wop{} machines does not correspond to any
    parameterized complexity class. See \cite[Section 4]{CFG2003} for
    discussions on subtle points which make classical complexity
    results fail in parameterized complexity.
  \item As in point 2, here one might hope for a \awp{} lower bound,
    which is quantified weighted circuit satisfiability (stronger than
    \awsat{}, which is quantified weighted formula satisfiability).
    Even if this improvement can be made, there is another classical
    complexity result not having analogous result in parameterized
    complexity: not much is known about the relation between
    parameterized alternating time bounded class (\awp{}) and
    parameterized space bounded class (uniform-\xnl{}).
\end{enumerate}

\paragraph*{Acknowledgements}
The authors thank Partha Mukhopadhyay and Geevarghese Philip for
helpful discussions about polynomial time hierarchy and parameterized
complexity theory.


\bibliographystyle{plain}
\bibliography{references}

\begin{thebibliography}{10}

\bibitem{Barcelo2013}
P.~Barcel\'{o}.
\newblock Querying graph databases.
\newblock In {\em Proceedings of PODS}, pages 175--188, New York, NY, USA,
  2013. ACM.

\bibitem{BLR2013}
P.~Barcel\'{o}, J.~Reutter, and L.~Libkin.
\newblock Parameterized regular expressions and their languages.
\newblock {\em Theoretical Computer Science}, 474:21--45, 2013.

\bibitem{CFG2003}
Yijia Chen, J.~Flum, and M.~Grohe.
\newblock Bounded nondeterminism and alternation in parameterized complexity
  theory.
\newblock In {\em Computational Complexity, 2003}, pages 13--29, 2003.

\bibitem{CM2014}
T.~Colcombet and A.~Manuel.
\newblock Generalized data automata and fixpoint logic.
\newblock In {\em FSTTCS}, volume~29 of {\em LIPIcs}, pages 267--278, 2014.

\bibitem{CM2015b}
T.~Colcombet and A.~Manuel.
\newblock Combinatorial expressions and lower bounds.
\newblock In {\em STACS}, volume~30 of {\em LIPIcs}, pages 249--261, 2015.

\bibitem{CM2015}
T.~Colcombet and A.~Manuel.
\newblock Fragments of fixpoint logic on data words.
\newblock In {\em FSTTCS}, volume~45 of {\em LIPIcs}, pages 98--111, 2015.

\bibitem{DKP2015}
R.~De~Haan, M.~Kronegger, and A.~Pfandler.
\newblock Fixed-parameter tractable reductions to sat for planning.
\newblock In {\em Proceedings of IJCAI}, pages 2897--2903, 2015.

\bibitem{DF2013}
R.~G. Downey and M.~R. Fellows.
\newblock {\em Fundamentals of Parameterized Complexity}.
\newblock Thomson Brooks/Cole, 1997.

\bibitem{Eggan1963}
L.~C. Eggan.
\newblock Transition graphs and the star-height of regular events.
\newblock {\em Michigan Math. J.}, 10(4):385--397, 1963.

\bibitem{GKS2010}
O.~Grumberg, O.~Kupferman, and S.~Sheinvald.
\newblock Variable automata over infinite alphabets.
\newblock In {\em LATA}, volume 6031 of {\em LNCS}, pages 561--572, 2010.

\bibitem{GHM2011}
C.~Gutierrez, C.~Hurtado, and A.~Mendelzon.
\newblock Foundations of semantic web databases.
\newblock {\em JCSS}, 77(3):520--541, 2011.

\bibitem{KRV2014}
E.V. Kostylev, J.L. Reutter, and D.~Vrgo\v{c}.
\newblock Containment of data graph queries.
\newblock In {\em ICDT}, pages 131--142, 2014.

\bibitem{L2005}
U.~Leser.
\newblock A query language for biological networks.
\newblock {\em Bioinformatics}, 21(suppl 2):ii33--ii39, 2005.

\bibitem{LMV2013}
L.~Libkin, W.~Martens, and D.~Vrgo\v{c}.
\newblock Querying graph databases with xpath.
\newblock In {\em ICDT}, pages 129--140, New York, NY, USA, 2013. ACM.

\bibitem{LTV2013}
L.~Libkin, T.~Tan, and D.~Vrgo\v{c}.
\newblock Regular expressions with binding over data words for querying graph
  databases.
\newblock In {\em DLT}, volume 7907 of {\em LNCS}, pages 325--337, 2013.

\bibitem{LV2012}
L.~Libkin and D.~Vrgo\v{c}.
\newblock Regular path queries on graphs with data.
\newblock In {\em ICDT}, pages 74--85, 2012.

\bibitem{NSV2004}
F.~Neven, T.~Schwentick, and V.~Vianu.
\newblock Finite state machines for strings over infinite alphabets.
\newblock {\em ACM Trans. Comput. Logic}, 5(3):403--435, 2004.

\bibitem{SPARQL2013}
W3C Recommendation.
\newblock Sparql 1.1 query language.
\newblock 21 March 2013.

\bibitem{RS2009}
R.~Ronen and O.~Shmueli.
\newblock Soql: A language for querying and creating data in social networks.
\newblock In {\em ICDE}, pages 1595--1602, 2009.

\bibitem{Segoufin2006}
L.~Segoufin.
\newblock Automata and logics for words and trees over an infinite alphabet.
\newblock In {\em CSL}, volume 4207 of {\em LNCS}, pages 41--57, 2006.

\bibitem{SIP1997}
M.~Sipser.
\newblock {\em Introduction to the Theory of Computation}.
\newblock Springer, 2013.

\bibitem{Tan2013}
T.~Tan.
\newblock Graph reachability and pebble automata over infinite alphabets.
\newblock {\em ACM Trans. Comput. Logic}, 14(3):19:1--19:31, 2013.

\bibitem{Vrgoc2015}
D.~Vrgo\v{c}.
\newblock Using variable automata for querying data graphs.
\newblock {\em Information Processing Letters}, 115(3):425--430, 2015.

\end{thebibliography}

\newpage
\appendix
\section{Details of Section~\ref{sec:nest-depth-iter-short}}
\label{sec:append-sect-hierarchy}

This section contains complete proofs and explanations from
Section~\ref{sec:nest-depth-iter-short}. We start with the semantics
of the automaton view of expressions. 

\subsection{Automata View of Expressions}

We will now provide in more detail the semantics of the automata
defined for each expression. Expressions we consider contain some free
variables and some bound variables due to the $\bind{x}$
operator. Without loss of generality, we will assume that no two
occurrences of the binding operator contain the same variable
name. Recall that for an expression $e$, we denote the set of its free
variables by $\fv(e)$, and the set of all variables (free and bound)
by $\var(e)$. A valuation associates every free variable to a data
value.

Consider an expression $f_i^1$, and its corresponding automaton
$\aut{f_i^1}$. Let $\val$ be a valuation which associates a data value
to all the free variables $\fv(f_i^1)$ of $f_i^1$. A run of
$\aut{f_i^1}$ over a data word $w =
\binom{a_1}{d_1}\binom{a_2}{d_2}\dots \binom{a_n}{d_n}$ given
valuation $\val$ is as follows:
\begin{align*}
  q_0 \xra{~w_1~} q_1 \xra{~w_2~} ~\cdots~ \xra{~w_{m-1}~} q_{m-1}
  \xra{~w_m~}q_m
\end{align*}
where:
\begin{itemize}
\item $q_0$ is an initial state,
\item $w = w_1 w_2 \dots w_m$,
\item if $i = 0$ then $m = n$ and for each $j$, we have $w_j =
  \binom{a_j}{d_j}$. Moreover, for each $j$, there exists a transition
  $q_{j-1} \xra{~a_j~} q_j$ or $q_{j-1} \xra{~a_j[c]~} q_j$ such that
  $d_j, \val \models c$,
\item if $i > 0$, then for each $j$ there exists a transition $q_{j-1}
  \xra{~e_{i}^{j}~} q_j$ such that $w_j \in L(e_{i}^j,
  \val\upharpoonright e_i^j)$, where $\val \upharpoonright e_i^j$
  denotes the valuation restricted to $\fv(e_i^j)$.
\end{itemize}

The run is accepting if $q_m$ is an accepting state of the automaton.
The language $L(\aut{f_i^1}, \val)$ is the set of words for which
$\aut{f_i^1}$ has an accepting run given valuation
$\val$.

Given an expression $e_i^1$ and a valuation $\val$ of its free
variables, the run of $\aut{e_i^1}$ on a data word $w$ is defined as:
\begin{align*}
  (q_0, \val_0) \xra{~w_1~} (q_1, \val_1) \xra{~w_2~} \cdots
  \xra{~w_{m-1}~} (q_{m-1}, \val_{m-1}) \xra{~w_m~} (q_m, \val_m)
\end{align*}
where
\begin{itemize}
\item $q_0$ is an initial state,
\item each $w_j$ is a data word such that $w=w_1w_2\dots w_m$,
\item each $\val_j$ is a partial function from $\var(e_i^1)$ to the
  set of data values, with $\val_0 = \val$;
\item for each $j$, either $w_j = \binom{a}{d}$ and there is a
  transition $q_{j-1} \xra{~a\bind{x}~} q_{j}$ and $\val_{j} =
  \val_{j-1}[x \rightarrow d]$, or there is a transition $q_{j-1}
  \xra{~f_{i-1}^{j}~} q_{j}$ with $w_{j} \in L(f_{i-1}^{j}, \val_{j-1}
  \upharpoonright f_{i-1}^j)$ and $\val_{j} = \val_{j-1}$. As before,
  $\val_{j-1} \upharpoonright f_{i-1}^j$ is a valuation for
  $f_{i-1}^j$ obtained by restricting the partial function
  $\val_{j-1}$ to $\fv(f_{i-1}^j)$.
\end{itemize}

The notion of acceptance and language $L(\aut{e_i^1}, \val)$ are
defined in a way similar to the $F_i$ case. We now explain with an
example the necessity of the restriction that no two occurrences of
the binding operator contain the same variable name. Suppose
$e_{1}^{1} = a\bind{x}(b\bind{x}(c[x^{=}])\cdot c[x^{\ne}])$. An
automaton would have the transitions $q_{0} \xra{~a\bind{x}~} q_{1}
\xra{~b\bind{x}~} q_{2} \xra{~c[x^{=}]} q_{3} \xra{~c[x^{\ne}]}
q_{4}$. There is no elegant way to specify that the value to be tested
in the transition $q_{3} \xra{~c[x^{\ne}]} q_{4}$ is the one bound in
$q_{0} \xra{~a\bind{x}~} q_{1}$ and not the one bound in $q_{1}
\xra{~b\bind{x}~} q_{2}$. Hence we consider the language equivalent
expression $a\bind{x_{1}}(b\bind{x_{2}}(c[x_{2}^{=}])\cdot
c[x_{1}^{\ne}])$, which avoids this problem.

The following lemma can be shown by an induction on $i$.

\newtheorem*{lemmaA1}{Lemma A.1}
\begin{lemmaA1}
  \label{lem:automaton-expression-same}
  For every expression $f_i^1$, and for every valuation $\val$ of
  $\fv(f_i^1)$, the languages $L(f_i^1, \val)$ and $L(\aut{f_i^1},
  \val)$ are equal. Similarly for expressions $e_i^1$.
\end{lemmaA1}

For any expression $e$, the size of $\aut{e}$ is defined as the number
of states present.

\subsection{Strictness of the Hierarchy}

Using the semantics of the automata developed above, we will give a
full proof of Lemma~\ref{lem:u-words-property}.

\smallskip
\Repeat{Lemma}{lem:u-words-property}
  Let $e_{i}^1$ be an expression and let $n \in \Nat$ be greater than
  $(|\aut{e}|+1)$ and $(|\var(e)|+1)$ for every sub-expression $e$ of
  $e_i^1$. Let $\val$ be a valuation of $\fv(e_{i}^1)$ and let $x, z$
  be data words. Then: $xu_{i,n} z \in L(e_{i}^1, \val) ~~\implies~~
  x\bar{u}_{i,n} z \in L(e_{i}^1, \val) \text{ for some }
  \bar{u}_{i,n} \in \mismatch_{i,n}$.

\begin{proof}
  We proceed by an induction on $i$. We start with the base
  case. Suppose $xu_{1,n}z \in L(e_1^1, \val)$ for some expression
  $e_1^1$ with $\max(|\aut{e_1^1}|,|\var(e_1^1)|) < n$. The automaton
  $\aut{e_1^1}$ has an accepting run of the following form:
  \begin{align*}
    \text{Run } \rho_1:~~ (q_0, \val_0) \xra{~w_1~} (q_1, \val_1)
    \xra{~w_2~} \cdots \xra{~w_{m-1}~} (q_{m-1},\val_{m-1})
    \xra{~w_m~} (q_m, \val_m)
  \end{align*}
  where $xu_{1, n}z = w_1w_2\dots w_m$. Recall that automata for
  $E_1$-expressions are acyclic, so states cannot repeat in a run.
  Since the number of states is strictly less than $n$ and $u_{1,n}$
  contains $n^2$ occurrences of $a_1b_1$, there is some $w_p$ which
  contains $n$ occurrences of $a_1 b_1$:
  \begin{align*}
    \binom{a_1}{d[1,{j+1}]}\binom{b_1}{d[1,{j+1}]} \dots
    \binom{a_1}{d[1,{j+n}]}\binom{b_1}{d[1,{j+n}]}
  \end{align*}
  Then by definition of runs of $\aut{e_1^1}$:
  \begin{itemize}
  \item there is a transition $q_{p-1} \xra{f_0^1} q_p$ in
    $\aut{e_1^1}$ with $w_p \in L(f_0^1, \val_{p-1})$ and
  \item valuation $\val_p$ equals $\val_{p-1}$ since the transition
    $q_{p-1} \xra{~~} q_p$ does not contain a binding.
  \end{itemize}

  
  Note that $\range(\nu_{p-1})$ can contain at most $|\var(e_1^1)|$
  distinct data values. Since by assumption $n$ is strictly bigger
  $|\var(e_1^1)|$, there can be at most $n-1$ distinct data values in
  $\range(\val_{p-1})$.   

  Let us now zoom in to the accepting run of
  $\aut{f_0^1}$ on the sub-word $w_p$.
  \begin{align*}
    \text{Run } \sigma_1:~~ q_0' \xra{~w_1'~} q_1' \xra{~w_2'~}
    ~\cdots~ \xra{~w_{s-1}'~} q'_{s-1} \xra{~w'_s~} q'_s
  \end{align*}
  with $w_p = w_1'w_2'\dots w_s'$. Each transition reads a single
  letter: that is, it is of the form $q'_{j-1} \xra{~a~} q'_j$ or
  $q_{j-1} \xra{~a[c]~} q_j$ with letter $a$ denoting either $a_1$ or
  $b_1$. Note that there can be no further bindings in this level
  $F_0$. So, each condition $a[c]$ in a transition can check for
  equality or inequality with respect to data values in
  $\range(\val_{p-1})$. Consider the transitions reading $b_1$. As
  there are at most $n-1$ data values in $\range(\val_{p-1})$, there
  is some $\binom{b_1}{d[1,j']}$ in $w_p$ such that $d[1,j']$ is
  different from all values in $\range(\val_{p-1})$. Therefore,
  changing $d[1,j']$ in $\binom{b_1}{d[1,j']}$ to a new data value
  $d'[1,j'] \notin \range(\val_{p-1})$ will give a data word which
  continues to satisfy all conditions occurring in the run $\sigma_1$
  of $\aut{f_0^1}$. The run $\rho_1$ is oblivious to this change. This
  is because there are no bindings in $\sigma_1$ and hence the
  valuation $\val_p$ is the same as $\val_{p-1}$. Hence the same run
  $\rho_1$ of $\aut{e_1^1}$ accepts this modified word.  Observe that
  in this word there is a mismatch between an $a_1$ and the
  consecutive $b_1$ occurring in $u_{1,n}$ and is of the form $x
  \bar{u}_{1,n}z$ as required by the lemma. This proves the lemma for
  the base case $i=1$.

  We will now prove the induction step. Assume that the lemma is true
  for some $i-1$. We will now prove it for $i$. Consider the word
  $xu_{i, n}z$. Suppose it belongs to $L(e_{i}^1, \val)$ for some
  expression $e_{i}^1$ with the value $n$ being an upper bound on
  $|\aut{e}|+1$ and $|\var(e)|+1$ for every subexpression $e$ of
  $e_i^1$. Let $\rho_i$ be the accepting run of $\aut{e_{i}^1}$ on
  $xu_{i,n}z$:
  \begin{align*}
    \text{Run } \rho_i: ~~ (q_0, \val_0) \xra{~w_1~} (q_1, \val_1)
    \xra{~w_2~} \cdots \xra{~w_{m-1}~} (q_{m-1},\val_{m-1})
    \xra{~w_m~} (q_m, \val_m)
  \end{align*}
  with $xu_{i,n}z = w_1w_2\dots w_m$. Since the automaton
  $\aut{e_i^1}$ is acyclic, no state can repeat in $\rho_i$.  As the
  number of states is less than $n$ and $u_{i,n}$ contains $n^2$
  occurrences of $a_i u_{i-1,n} b_i$, some $w_p$ contains $n$
  occurrences of the block $a_iu_{i-1,n} b_i$:
  \begin{align*}
    \binom{a_i}{d[i,j+1]} u_{i-1, n} \binom{b_i}{d[i,j+1]} ~\cdots~
    \binom{a_i}{d[i,j+n]} u_{i-1,n} \binom{b_i}{d[i,j+n]}
  \end{align*}
  Then, by definition of runs of $\aut{e_i^1}$:
  \begin{itemize}
  \item there is a transition $q_{p-1} \xra{f_{i-1}^1} q_p$ in
    $\aut{e_{i}^1}$ with $w_p \in L(f_{i-1}^1, \val_{p-1})$ and
  \item $\val_p = \val_{p-1}$.
  \end{itemize}

  As $\range(\val_{p-1})$ can contain at most $|\var(e_i^1)|$ distinct
  data values, and since $n$ is bigger than $|var(e_i^1)| + 1$, we
  observe that
  $\range(\val_{p-1})$ contains at most $n-1$ distinct data values.
  Consider the run of $\aut{f_{i-1}^1}$ on $w_p$ given valuation
  $\val_{p-1}$:
  \begin{align*}
    \text{Run } \s_i: ~~ q'_0 \xra{~w'_1~} q'_1 \xra{~w'_2~} ~\cdots~ \xra{~w'_{s-1}~}
    q'_{s-1} \xra{~w'_s~} q'_s
  \end{align*}
  where $w_p = w'_1w'_2\dots w'_s$.

  If some $w'_j$ contains $u_{i-1, n}$ entirely, then this $w'_j$
  belongs to the language of $L(e_{i-1}^1, \val_{p-1})$ for some
  expression $e_{i-1}^1$. This is as per the definition of runs of
  $F_i$ automata. Additionally, $e_{i-1}^1$ is a subexpression of
  $e_i^1$ and hence satisfies the condition that $n$ is bigger than
  $|\aut{e}|+1$ and $|\var(e)|+1$ for every subexpression $e$ of
  $e_{i-1}^1$. We can then use the induction hypothesis to infer that
  there is a mismatched word in $L(e_{i-1}^1, \val_{p-1})$. Hence we
  can replace $w_j'$ with this mismatched word to obtain the same runs
  $\sigma_i$ and $\rho_i$, thus proving the lemma for this case.

  Otherwise, no $w_j'$ contains both $\binom{a_i}{d[i,j]}$ and
  $\binom{b_i}{d[i,j]}$ of a block $a_iu_{i-1,n}b_i$. As
  $\range(\val_{p-1})$ has at most $n-1$ distinct data values, there
  is one $\binom{b_i}{d[i,j]}$ such that $d[i,j]$ is not present in
  $\range(\val_{p-1})$. Let this be present in $w_k'$, and let
  $e_{i-1}^k$ be the sub-expression in the transition $q'_{k-1}
  \xra{~~} q'_k$ with $w_k' \in L(e_{i-1}^k, \val_{p-1})$. Consider a
  fresh data value $d' \notin \range(\val_{p-1})$ and which is
  different from every data value in $w_k'$. Change all occurrences of
  $d[i,j]$ in $w_k'$ to this fresh data value $d'$. The modified word
  belongs to $L(e_{i-1}^k, \val_{p-1})$ (this can be shown by a
  structural induction for general REWBs). Therefore the run
  $\sigma_i$ holds for the word $w_p$ with $w_k'$ modified to this new
  word. Moreover, as discussed in the base case, the run $\rho_i$ is
  not affected by this change as the valuation $\val_p$ is the same as
  $\val_{p-1}$. This gives a word with a mismatch between an $a_i$ and
  the corresponding $b_i$ that is accepted by $e_i^1$, thereby proving
  the lemma.
 
\end{proof}

\subsection{Undecidability of Membership at a Given Level}

This section is devoted to proof of the following theorem.

\smallskip
\Repeat{Theorem}{thm:undec-height}
  Given an expression in $F_{i+1}$, checking if there exists an
  equivalent expression in $F_i$ is undecidable.
\smallskip

The basic idea is
from the proof of undecidability of universality of REWBs and related
formalisms \cite{NSV2004,LTV2013}. If a given REWB is universal, i.e.,
accepts all data words, then there is a language equivalent expression
that does not use any binding. The undecidability of universality can
hence be interpreted to mean that determining the usefulness of
bindings in an expression is undecidable. We combine this insight with
results we have obtained for the expressions $r_{i}$ in the previous
sub-section to prove Theorem~\ref{thm:undec-height}. We proceed by a
reduction from Post's Correspondence Problem (PCP). An instance of PCP
is a set $\{(u_1, v_1), (u_2, v_2), \dots, (u_n, v_n)\}$ of pairs of
words over a finite alphabet $\Sigma_{PCP}$. A solution to this
instance is a sequence $l_1, l_2, \dots, l_m$ with each $l_j \in \{1,
\dots, n\}$ such that $u_{l_1}u_{l_2}\dots u_{l_m} =
v_{l_1}v_{l_2}\dots v_{l_m}$.

Suppose we are given an instance $\{(u_1, v_1), (u_2, v_2), \dots,
(u_n, v_n)\}$ of PCP. We will encode a solution $l_1, \dots, l_m$ to
this instance by a set of data words of the form:
\begin{align*}
  \th_1 ~\binom{\#}{d_1} ~z~ \binom{\#}{d_2}~ \th_2
\end{align*} where:
\begin{itemize}
\item $z \in L(r_i)$, with $r_i$ being the expression in
  Definition~\ref{def:strict-expressions},
\item $\theta_1$ is the data word:
  \begin{align*}
    \binom{\$_{l_1}}{h_1}\binom{\a_1}{1}\dots
    \binom{\a_{p}}{p}\binom{\$_{l_2}}{h_2}\binom{\a_{p+1}}{p+1}~~\dots~~
    \binom{\$_{l_m}}{h_m}~\dots \binom{\a_{r}}{r}
  \end{align*}
  where the word $\a_s \dots \a_{t}$ between $\$_{l_j}$ and
  $\$_{l_{j+1}}$ equals the word $u_{l_j}$,
\item $\theta_2$ is the data word:
  \begin{align*}
    \binom{\$_{l_1}}{h_1}\binom{\b_1}{1}\dots
    \binom{\b_{q}}{q}\binom{\$_{l_2}}{h_2}\binom{\b_{q+1}}{q+1}~~\dots~~
    \binom{\$_{l_m}}{h_m}~\dots \binom{\b_{r}}{r}
  \end{align*}
  where the word $\b_s \dots \b_{t}$ between $\$_{l_j}$ and
  $\$_{l_{j+1}}$ equals the word $v_{l_j}$,
\item the data values $\{1, \dots, r, d_1, d_2, h_1, \dots, h_m\}$ are
  all distinct.
\end{itemize}

We will first construct an expression $\Delta$ that accepts all words
of the form $w_1 \binom{\#}{d_{1}} z
\binom{\#}{d_{2}} w_2$ with $z \in L(r_i)$ such that the part $w_1\#
~~\#w_2$ does not satisfy the conditions mentioned above. The
expression $\Delta$ will be in $E_{i+1}$. We will then reason that
this expression will have an equivalent expression in $F_i$ iff PCP
has no solution.

Let $\G$ denote the finite alphabet $\Sigma_{PCP}\cup\{\$_1, \dots,
\$_n, \#\}$.  We will now exhaustively reason about the situations
when a word $w_1 \binom{\#}{d_{1}} z \binom{\#}{d_{2}} w_2$ is not an
encoding of the PCP solution. This will give us the expression
$\Delta$ mentioned above.

\textbullet\hphantom{a} Projection of the word on to the finite
alphabet is not of the form $(\$_1u_1 + \dots +
\$_nu_n)^*~\#~z~\#~(\$_1v_1 + \dots + \$_nv_n)^*$. Let $\phi_1$ and
$\phi_2$ be the regular expressions denoting the complement of
$(\$_1u_1 + \dots + \$_nu_n)^*$ and $(\$_1v_1 + \dots + \$_nv_n)^*$
respectively. The required expression that accepts words with a
mistake in the finite alphabet is $\phi_1 ~\#~r_i~\#~ \G^* ~+
~\G^*~\#~r_i~\#~\phi_2 $. Note that this expression is at the same
level as $r_i$ since $r_{i}$ is not in the scope of any binding.

\textbullet\hphantom{a} Words where the data values are not according
to the encoding. Firstly, words of the form $ \cdots
\binom{\#}{d}~z~\binom{\#}{d'} \cdots$ where $d$ or $d'$
repeat. Expression accepting words where $d$ repeats is given by:
\begin{align*}
  \bigplus_{a \in \G} (\G^* a\bind{x}(~\G^* ~\#[x^=]~) ~r_i~ \G^* ~+~
  \G^*\# \bind{x}(~r_i~ \G^* a[x^=] ~\G^*))
\end{align*}
A similar expression can be given for the case where $d'$
repeats. Since these expressions add a binding over $r_i$, they are in
$E_{i+1}$. 

\textbullet\hphantom{a} Words of the form $\cdots \binom{*}{d} \cdots
\binom{*}{d} \cdots ~\#~z~\# \cdots$ where a data value repeats before
the $\#~z~\#$ and words of the form $\cdots ~\#~z~\# \cdots
\binom{*}{d} \cdots \binom{*}{d} \cdots$ where values repeat after
$\#~z~\#$
\begin{align*}
  \bigplus_{a,b\in \G} (\G^* a\bind{x}(~\G^*
  ~b[x^=]~\G^*)~\#~r_i~\#~\G^* ~+~ \G^* ~\#~ r_i~\#~ \G^*
  a\bind{x}(~\G^* ~b[x^=]~\G^*))
\end{align*}

\textbullet\hphantom{a} Note that in the encoding of the solution, the
data values in the $j^{th}$ dollar symbol before and after $\#~z~\#$
need to be the same. We will now consider words where this is not
true. Let us first look at words where the mismatch occurs either in
the first dollar symbol or in the last dollar symbol.
\begin{align*}
  \bigplus_{\$, \$' \in \set{\$_{1}, \ldots, \$_{n}}} (\$ \bind{x} (
  \G^* ~\#~r_i~\#~\$' [x^\neq])~\G^* ~~+~~ \G^* \$
  \bind{x}(~\S_{PCP}^*~\#~r_i~\#~ \G^* ~\$' [x^\neq]~) \S_{PCP}^*)
\end{align*}
Suppose the first dollar symbols to the left and right of $\# z \#$
have the same data value, and so do the last dollar symbols. In this
case, if there some $j$ such that the $j$\textsuperscript{th} dollar
symbol to the left and right of $\# z \#$ have different data values,
the data word is of the following form.
\begin{align*}
  \cdots \binom{\d_{1}}{d}~\S_{PCP}^*~ \binom{\d_{2}}{d'} \cdots
  ~\#~z~\#~ \cdots
  \binom{\d_{3}}{d}~\S_{PCP}^*~\binom{\d_{4}}{d''}~\cdots
\end{align*}
where $\d_{1}, \d_{2}, \d_{3}, \d_{4} \in \set{\$_{1}, \ldots,
  \$_{n}}$ and $d' \ne d''$. There is a data value $d$ occurring with
a dollar on both sides, and the data values attached with next dollar
symbols on the two sides do not match. The expression for such words
is given by:
\begin{align*}
  \bigplus_{\d_{1}, \ldots, \d_{4} \in \set{\$_{1}, \ldots, \$_{n} }}
  (\G^* ~\d_{1} \bind{x}~( ~\S_{PCP}^* ~\d_{2} \bind{y} ~(~ \G^*
  ~\#~r_i~\#~ \G^*~ \d_{3} [x^=] ~\S_{PCP}^* ~\d_{4} [y^\neq]~)~ )~
  \G^*)
\end{align*}

\textbullet\hphantom{a} Now we will consider words where the mismatch
of data values occurs in a non-dollar position. We start with the
expression for words with a mismatch in the first or last non-dollar
position:
\begin{align*}
  \bigplus_{\d_{1}, \d_{2} \in \set{\$_{1}, \ldots, \$_{n}}, a, b \in
    \S_{PCP}}\d_{1}~a \bind{x} (\G^* ~\#~r_i~\#~ \d_{2} ~b
  [x^\neq])\G^* ~~ + ~~ \G^* a\bind{x} (\#~r_i~\# ~\G^* ~b[x^\neq])
\end{align*}
For detecting mismatch at an intermediate position, we resort to the
same idea as in the previous case. We consider words of the form:
\begin{align*}
  \cdots \binom{\a_1}{d}~(\e + \d_{1})~\binom{\a_2}{d'}
  \cdots~\#~r_i~\#~ \cdots \binom{\a_3}{d}~(\e +
  \d_{2})~\binom{\a_4}{d''}~\cdots
\end{align*}
The expression for such words is given by:
\begin{align*}
  \bigplus_{\d_{1},\d_{2} \in \set{\$_{1}, \ldots, \$_{n}},
    \a_{1},\ldots, \a_{4} \in \S_{PCP}} \G^* \a_1 \bind{x} ((\e
    + \d_{1})~\a_2 \bind{y} (\G^*\#r_i\#\G^*\a_3 [x^=]~(\e +
    \d_{2})~\a_4
  [y^\neq]))\G^*
\end{align*}

\textbullet\hphantom{a} We are now left with words where the data
values on every corresponding position before and after $~\# ~z~ \#$
match. Among these words, the non-solutions are the ones where for a
particular data value occurring on both sides of $\#~z~\#$, the
corresponding letters do not match.  The expression for such words is
given by:
\begin{align*}
  \bigplus_{\g_{1} \ne \g_{2}}\G^*~\g_1 \bind{x} (~\G^*~\#~r_i~\#~\G^*
  ~\g_2 [x^=]~)~\G^*
\end{align*}

The required expression $\Delta$ is the sum of all the above
expressions. Note that $\D$ has a binding made on the left side of
$\#~r_i~\#$ that is checked on the right side. This makes expression
$\Delta$ to fall in $E_{i+1}$ as expression $r_i$ is in $F_{i}$.

\newtheorem*{lemmaA2}{Lemma A.2}
\begin{lemmaA2}
  The expression $\Delta$ has an equivalent expression in $F_i$ iff
  the given PCP instance has no solution.
\end{lemmaA2}
\begin{proof}
  Suppose PCP instance has no solution. Then all words of the form
  $w_1~\#~z~\#~w_2$ with $w_1, w_2 \in \G^*$ and $z \in L(r_i)$ are in
  the language of the expression $\Delta$. Therefore an equivalent
  expression for $\Delta$ is $\G^*~\#~r_i~\#~\G^*$. This expression is
  in $F_i$ as the expression $r_i$ is in $F_i$.

  Suppose PCP instance has a solution. Let us assume that $\Delta$ has
  an equivalent expression $f_{i}^1$. We will show that this leads to
  a contradiction. For technical convenience, let us assume that
  $f_i^1$ has no free variables (the case with free variables can be
  handled in a similar way). Let $n$ be a natural number such that
  $|\aut{f_i^1}| < n$. Consider a word $\th_1~\#~u_{i,n}~\#~\th_2$
  that encodes the solution of the PCP instance. Let $\th_2'$ be a new
  data word obtained from $\th_2$ by modifying the last data value to
  a fresh data value not occurring in
  $\th_1~\#~u_{i,n}~\#~\th_2$. Then, $\th_1~\#~u_{i,n}~\#~\th'_2$ does
  not encode any solution and hence belongs to $L(\Delta)$.  Let us
  now look at the run of $\aut{f_i^1}$ on the word
  $\th_1~\#~u_{i,n}~\#~\th'_2$:
  \begin{align*}
    q_0 \xra{~w_1~} q_1 \xra{~w_2~} ~\cdots~ \xra{~w_{m-1}~} q_{m-1}
    \xra{~w_m~}q_m
  \end{align*}
  where $w_1w_2 \dots w_m = \th_1~\#~u_{i,n}~\#~\th'_2$.
  
  If some $w_p$ contains $u_{i,n}$ entirely, then by definition of
  runs of $\aut{f_i^1}$, there is some subexpression $e_{i}^1$ and a
  word $x u_{i,n} z$ such that $x u_{i,n} z \in L(e_i^1)$. Note that
  we assumed that there are no free variables. Then, by Lemma
  \ref{lem:u-words-property} there is a word $x \bar{u}_{i,n} z \in
  L(e_i^1)$ where $\bar{u}_{i,n}$ contains a mismatch. However, by
  definition of $\Delta$, this is not possible. Therefore no $w_p$ can
  contain $u_{i,n}$ entirely. This would then imply that $\th_1$ and
  $\th_2'$ lie in different $w_j$: in particular, the last part of the
  run $w_m$ contains the last letter in $\th'_2$ and moreover does not
  contain any part of $\th_1$. Hence, data values in $\th_{2}'$ are
  never compared with those in $\th_{1}$. Note that by the definition
  of runs, the word $w_m \in L(e_i^2)$ for some subexpression $e_i^2$.
  Changing the last data value of $w_{m}$ back to the value in $\th_2$
  will result in a word which is an automorphic copy of $w_m$ and
  hence this modified word should also lie in $L(e_i^2)$. This shows
  that the same run of $\aut{f_{i}^{1}}$ can accept
  $\th_1~\#~u_{i,n}~\#~\th_2$ which encodes a solution of the PCP
  instance. Therefore the expression supposed to be equivalent to
  $\Delta$ accepts a solution of the PCP instance. A contradiction.
\end{proof}

The above lemma proves Theorem~\ref{thm:undec-height}. The expression
$\Delta$ is in $E_{i+1}$ (and hence in $F_{i+1}$). Checking if it has
an equivalent expression in $F_i$ is undecidable as this can encode
PCP.


\section{Details of Section~\ref{sec:query-evaluation}}

\subsection{Upper Bounds}
We first introduce some normal forms for expressions in $E_{i}$. Let
$U_{i}$ be the set of REWBs generated by the grammar $U_{i} ::=
F_{i-1} ~|~ U_{i} \cdot U_{i} ~|~ a\downarrow_{x}(U_{i})$. An
expression in $E_{i}$ is said to be in \emph{Union Normal Form} (UNF)
if it is of the form $u_{i}^{1} + u_{i}^{2} + \cdots + u_{i}^{r}$,
where $u_{i}^{j}$ is an expression in $U_{i}$ for every $j \in
\set{1,\ldots, r}$. From the semantics of REWBs, we infer that binding
and concatenation distribute over union. By repeatedly applying this
fact to any expression in $E_{i}$, we get the following result.
\newtheorem*{proposition1}{Proposition B.1}
\begin{proposition1}
    \label{prop:UnionNormalForm}
    For every expression $e_{i}^{1}$ in $E_{i}$, there exists a language
    equivalent one $u_{i}^{1} + u_{i}^{2} + \cdots + u_{i}^{r}$ in
    UNF such that $|\aut{u_{i}^{j}}| \le
    |\aut{e_{i}^{1}}|$ for every $j \in \set{1,\ldots, r}$.
\end{proposition1}

\Repeat{Lemma}{lem:evalFiWithOracleForEi}
  With an oracle for evaluating $E_{i}$ queries,
  $F_{i}$ queries can be evaluated in polynomial time.
\begin{proof}
    Let $G$ be the given data graph, $f_{i}^{1}$ be the query to be
    evaluated and $\val$ be the given valuation for $\fv(f_{i}^{1})$.
    For every pair $\struct{v_{1}, v_{2}}$ of nodes in $G$ and every
    sub-expression $e_{i}^{1}$ of $f_{i}^{1}$,
    check if $\struct{v_{1}, v_{2}} \in e_{i}^{1}[\val](G)$ by calling the
    oracle. Draw an edge labeled $e_{i}^{1}$ from $v_{1}$ to $v_{2}$
    iff the oracle answers positively. Call the resulting data graph
    $G'$.

  Perform the standard product construction of $\aut{f_{i}^{1}}$ with
  $G'$ (this can be done since $G'$ also treats sub-expressions in
  $E_{i}$ as a single letter). A pair $\struct{v,v'}$ belongs to
  $f_{i}^{1}[\val](G)$ iff $(v',q_{f})$ is reachable from $(v,q_{0})$
  in the product system, where $q_{f}$ and $q_{0}$ are some final and
  initial states of $\aut{f_{i}^{1}}$ respectively.

  For the case of $F_{0}$, the only sub-expressions that can not be
  handled directly by standard automata are those of the form
  $a[c]$. Given the evaluation $\val$, such expressions can be evaluated
  in linear time. Hence, in this case, the above procedure takes
  polynomial time without any oracle.
\end{proof}

\Repeat{Theorem}{thm:complexityEvaluation}
  For queries in $E_{i}$, the evaluation problem belongs to $\S_{i}$.
\begin{proof}
  By induction on $i$. For the base case $i=1$, let $e_{1}^{1}$ be the
  given expression and let $\val$ be the given valuation for the free
  variables of $e_{1}^{1}$. We begin by non-deterministically choosing
  one of the sub-expressions for every sub-expression
  $e_{1}^{2}+e_{1}^{3}$ of $e_{1}^{1}$. This will result in an
  expression $u_{1}^{1}$ in $U_{1}$.
  Now, to check if $\struct{v,v'} \in u_{1}^{1}[\val](G)$, we proceed by
  recursion on the structure of $u_{1}^{1}$ as follows.
  \begin{itemize}
      \item To check if $\struct{v_{1},v_{2}} \in (u_{1}^{2}\cdot
      u_{1}^{3})[\val](G)$, we non-deterministically guess a node
      $v_{3}$ and recursively check that $\struct{v_{1},v_{3}} \in
      u_{1}^{2}[\val](G)$ and $\struct{v_{3},v_{2}} \in u_{1}^{3}[\val](G)$.
  \item To check if $\struct{v_{1},v_{2}} \in a
      \downarrow_{x}(u_{1}^{2})[\val](G)$, we non-deterministically
      choose an $a$-successor $v_{3}$ of $v_{1}$ and note the data
      value $d$ of the $a$-labeled edge from $v_{1}$ to $v_{3}$. Next
      we recursively check that $\struct{v_{3},v_{2}} \in u_{1}^{2}[\val[x \to
      d]](G)$.
  \item To check if $\struct{v_{1}, v_{2}} \in f_{0}^{1}[\val](G)$ for some
      expression $f_{0}^{1}$ in $F_{0}$, we proceed as in the proof of
      Lemma~\ref{lem:evalFiWithOracleForEi}.
  \end{itemize}

  Next we inductively assume that evaluating expressions in
  $E_{i}$ is in $\S_{i}$. To evaluate expressions in $E_{i+1}$, we
  proceed in the same way as in the base case. The only difference is
  in the case where we have to check $\struct{v_{1}, v_{2}} \in
  f_{i}^{1}[\val](G)$ for some expression $f_{i}^{1}$ in $F_{i}$.
  From Lemma~\ref{lem:evalFiWithOracleForEi}, this can be done in
  polynomial time with an oracle for evaluating expressions in
  $E_{i}$. Since, by induction hypothesis, the oracle itself is in
  $\S_{i}$, we conclude that evaluating expressions in $E_{i+1}$ is in
  $\S_{i+1}$.
\end{proof}

\newtheorem*{lemma2}{Lemma B.2}
\begin{lemma2}
    \label{lem:shortWitness}
    Suppose $e_{i}^{1}$ is an expression in $E_{i}$, with $|\aut{e}|
    \le k$ for every sub-expression $e$ of $e_{i}^{1}$. Let $\val$ be
    a valuation for $\fv(e_{i}^{1})$. If there is a data path in
    $L(e_{i}^{1}, \val)$ connecting $v_{1}$ to $v_{2}$ in a data graph
    with $n$ nodes, then there is such a data path of length at most
    $(k^{2} n)^{i}$.
\end{lemma2}
\begin{proof}
    By induction on $i$. For the base case $i=1$, we begin by giving
    short witnesses for subexpressions of $e_{1}^{1}$. For a
    subexpression $f_{0}^{1}$, the automaton $\aut{f_{0}^{1}}$ will have at most
    $k$ states. Since the valuation does not change, we can infer
    from standard pumping arguments that if a data path in the
    language of $f_{0}^{1}$ connects $v_{1}$ to $v_{2}$, there is such
    a data path of length at most $kn$. Next we consider $e_{1}^{1}$. For
    every data path $w$ in the language of $e_{1}^{1}$, we infer from
    Proposition~\ref{prop:UnionNormalForm} that there is an expression
    $u_{1}^{1}$ that contains $w$ in its language. The path $w$ may be
    split into sub-paths, each of which is in the language of some
    sub-expression of $u_{1}^{1}$, of the form $f_{1}^{1}$ or $a
    \bind{x}$. Since $|\aut{u_{1}^{1}}| \le k$, the number of
    such sub-expressions, and hence the number of sub-paths in $w$, is
    at most $k$. We have already seen that each sub-path can be
    replaced by one of length at most $kn$.  Hence, the total length
    of the path is at most $k^{2}n$.

    The induction step is similar, contributing a multiplicative
    factor of $k^{2}n$. Hence the result follows.
\end{proof}

\Repeat{Theorem}{thm:evalLevelInWP}
    Evaluating REWB queries in $E_{1}$, parameterized by the size of
    the query is in \wop{}.
\begin{proof}
    We will use \cite[Lemma~7, Theorem~8]{CFG2003}, which give machine
    characterizations for problems in \wop{}. They prove that a
    parameterized problem is in \wop{} iff there is a
    non-deterministic Turing machine that takes an instance $(x,k)$
    and decides the answer within $f(k)|x|^{c}$ steps, of which at
    most $f(k) \log |x|$ are non-deterministic (for some
    computable function $f$ and a constant $c$). Such a Turing machine
    exists for evaluating REWB queries in $E_{1}$. In such queries,
    every binding in the query is performed at most once in a path
    (since bindings are not iterable). Hence, the machine can first
    non-deterministically choose the data values for each binding in
    the query in the allowed number of non-deterministic steps. Then
    the expression can be treated as a standard regular expression, by
    substituting the guessed data values for the bindings.  The set of
    data values found in the data graph can be considered as a finite
    alphabet and evaluation can be done in polynomial time using
    standard automata theoretic techniques.
\end{proof}

Just like we get \wop{} from \wsat{} by replacing formulas with
circuits, we get \awp{} from \awsat{} by replacing formulas with
circuits. It has been proved in \cite[Theorem~17]{CFG2003} that a
parameterized problem is in \awp{} iff there is an alternating Turing
machine that takes an instance $(x,k)$ and decides the answer within
$f(k)|x|^{c}$ steps, of which at most $f(k) \log |x|$ are existential
or universal (let us call such machines \awp{} machines). We have seen
in Theorem~\ref{thm:evalLevelInWP} that evaluating REWB queries in
$E_{1}$ can be done by non-deterministic Turing machines with bounded
non-determinism (let us call them \wop{} machines). As we did in
Theorem~\ref{thm:complexityEvaluation}, we can evaluate REWB
queries in $E_{i}$ using an oracle hierarchy of height $i$, consisting of \wop{}
machines. In complexity theory, an oracle hierarchy of \NP{} machines
is known to be equivalent to an alternating Turing machine. It is
tempting to draw an analogous conclusion in parameterized complexity
theory, saying that an oracle hierarchy of \wop{} machines is
equivalent to an \awp{} machine. However, we have not been able to
prove such an equivalence for the following reason. In order to
simulate oracle calls in an alternating machine, one generally needs
as many non-deterministic steps as the number of calls to the oracle.
In the oracle hierarchy of \wop{} machines, the number of calls to an
oracle may be polynomial in the size of the input, but the number of
non-deterministic steps allowed in \awp{} machines is logarithmic in
the size of the input. We refer the interested reader to
\cite[Section 4]{CFG2003} for some discussions on how some results in
complexity theory fail in parameterized complexity theory.

For the query evaluation problem, we do not have upper bounds in
parameterized alternating time bounded classes. However, we can get an
upper bound in \emph{uniform}-\xnl{}, a parameterized space bounded class.

\smallskip
\Repeat{Theorem}{thm:xnlUpperBound}
    Evaluating REWB queries, with size of the query as parameter, is
    in uniform-\xnl{}.
\begin{proof}
    We give a space bounded non-deterministic algorithm. Suppose $n$
    is the size of the data graph and $k$ is the size of the
    expression and a pair of nodes is connected by a data path in the
    language of the expression. We know from
    Lemma~B.2 that there is such a data path of
    length at most $( (g(k))^{2}n)^{k}$, where $g(k)$ is an upper
    bound on $|\aut{e}|$ for any REWB $e$ of size $k$. A
    non-deterministic algorithm can guess and verify such a data path.
    It would have to store a counter to keep track of the length of
    the path, a valuation for variables in the expression and a node
    of the graph. All this needs space at most $\mathcal{O}(
    (g(k))^{2} \log n)$.
\end{proof}

\subsection{Lower Bounds}

\Repeat{Lemma}{lem:BoolFormulaSchema}
  Let $\phi$ be a Boolean formula over the propositional atoms
  $\prvar_{1}, \ldots, \prvar_{n}$ and $\val: \set{x_{1}, \ldots,
  x_{k}} \to \set{\prvar_{1}, \ldots, \prvar_{n}, *}$ be a valuation.
  The source of $G_{\phi}$ is connected to its sink by a data path
  in $L(e_{\eval}[k], \val)$ iff $\phi$ is satisfied by the truth
  assignment that sets exactly the propositions in $\set{\prvar_{1},
  \ldots, \prvar_{n}} \cap \range(\val)$ to true.
\begin{proof}
    By induction on the structure of the Boolean formula. Suppose
    $\phi$ is a positively occurring atom $\prvar_{j}$, satisfied by
    the truth assignment. Hence, the data value $\prvar_{j}$ is in
    $\set{\val(x_{1}), \ldots, \val(x_{k})}$. So the data path $\cdot
    \xrightarrow{\binom{a}{\pol}} \cdot \xrightarrow{\binom{a}{\nel}}
    \cdot \xrightarrow{\binom{b}{*}} \cdot
    \xrightarrow{\binom{\ponet}{\pol}} \cdot
    \xrightarrow{\binom{\prvarl}{\prvar_{j}}} \cdot
    \xrightarrow{\binom{e}{*}} \cdot$ is in $L(a\bind{x_{\pol}}
    a\bind{x_{\nel}}b\cdot \ponet[x_{\pol}^{=}]\cdot \prvarl[x_{1}^{=}
    \lor \cdots \lor x_{k}^{=}]e, \val)$, giving the desired data
    path in $G_{\phi}$.
    Conversely, suppose that the source of $G_{\phi}$ is connected to
    its sink by a data path in $L(e_{\eval}[k], \val)$. Since
    $e_{\eval}[k]$ begins with $a\bind{x_{\pol}} a \bind{x_{\nel}}$ and
    $G_{\phi}$ begins with $\cdot \xrightarrow{\binom{a}{\pol}} \cdot
    \xrightarrow{\binom{a}{\nel}} \cdot$, $x_{\pol}$, $x_{\nel}$ will
    have the values $\pol$, $\nel$ respectively. To reach the sink of
    $G_{\phi}$, the data path $\cdot \xrightarrow{\binom{b}{*}} \cdot
    \xrightarrow{\binom{\ponet}{\pol}} \cdot
    \xrightarrow{\binom{\prvarl}{\prvar_{j}}} \cdot
    \xrightarrow{\binom{e}{*}} \cdot$ has to be in
    $L(b\cdot\ponet[x_{\pol}^{=}]\cdot \prvarl[x_{1}^{=} \lor \cdots
    \lor x_{k}^{=}]e, \val)$. This implies that the data value
    $\prvar_{j}$ is in $\set{\val(x_{1}), \ldots, \val(x_{k})}$, which
    in turn implies that $\phi$ is satisfied by the truth assignment.  The
    argument is similar for a negatively occurring propositional
    atom. The induction steps are standard arguments based on the
    semantics of Boolean formulas.
\end{proof}

\Repeat{Theorem}{thm:evalLevel1NPHard}
  For queries in $E_{1}$, the evaluation problem is \NP{}-hard.
\begin{proof}
  We will reduce the satisfiability problem for Boolean formulas to
  the query evaluation problem. Suppose $\phi$ is a Boolean formula
  over the propositional atoms $\prvar_{1}, \ldots,
  \prvar_{n}$. The data graph is as follows.
  \begin{center}
    \begin{tikzpicture}[>=stealth, label distance=0.3\ml]
      \node[state, label=-90:$u$] (v0) at (0\ml,0\ml) {};
      \node[state] (v1) at ([xshift=1.5\ml]v0) {};
      \node[state] (v2) at ([xshift=1.5\ml]v1) {};
      \node[state] (v3) at ([xshift=2.5\ml]v2) {};
      \node[state] (v4) at ([xshift=1.5\ml]v3) {};
      \node[state, label=-90:$v$] (v5) at ([xshift=1.5\ml]v4) {};
      \node at (barycentric cs:v4=1,v5=1) {$G_{\phi}$};

      \draw[->] (v0) edge[bend left] node[pos=0.5, auto=left] {$\binom{a}{\prvar_{1}}$} (v1);
      \draw[->] (v0) edge[bend right] node[pos=0.5, auto=right] {$\binom{a}{*}$} (v1);
      \draw[->] (v1) edge[bend left] node[pos=0.5, auto=left] {$\binom{a}{\prvar_{2}}$} (v2);
      \draw[->] (v1) edge[bend right] node[pos=0.5, auto=right] {$\binom{a}{*}$} (v2);
      \draw[dotted] (v2)-- (v3);
      \draw[->] (v3) edge[bend left] node[pos=0.5, auto=left] {$\binom{a}{\prvar_{n}}$} (v4);
      \draw[->] (v3) edge[bend right] node[pos=0.5, auto=right] {$\binom{a}{*}$} (v4);

      \draw (barycentric cs:v4=1,v5=1) ellipse [x radius=1.3\ml, y radius=0.5\ml];
    \end{tikzpicture}
  \end{center}
  The ellipse at the end denotes the data graph $G_{\phi}$
  corresponding to the Boolean formula $\phi$, along with its source
  and and sink nodes. The query to be evaluated on this is $a
  \bind{x_{1}} a \bind{x_{2}} \cdots a \bind{x_{n}}
  e_{\eval}[n]$. To avoid too many parenthesis, we have not
  shown the scope of bindings. The scope of every binding extends till
  the end of the expression. We claim that the pair $\struct{u,v}$ is in the
  result of the query iff $\phi$ is satisfiable. Indeed, suppose
  $\struct{u,v}$ is in the result of the query. The data path from $u$ to
  $v$ will have two parts. The first one in $L(a \bind{x_{1}} a
  \bind{x_{2}} \cdots a \bind{x_{n}})$ from $u$ to the source of
  $G_{\phi}$, resulting in a valuation $\val: \set{x_{1}, \ldots,
  x_{n}} \to \set{\prvar_{1}, \ldots, \prvar_{n}, *}$. The second part is
  in $L(e_{\eval}[n], \val)$, connecting the source of $G_{\phi}$ to
  its sink.  From Lemma~\ref{lem:BoolFormulaSchema}, $\phi$ is
  satisfied by the truth assignment that sets $\prvar_{j}$ to true iff
  $\val(x_{j}) = \prvar_{j}$. Conversely, suppose $\phi$ is satisfied
  by some truth assignment $\a: \set{\prvar_{1}, \ldots, \prvar_{n}}
  \to \set{\mathrm{true}, \mathrm{false}}$. Consider the data path
  from $u$ to the source of $G_{\phi}$ that takes the edge labeled
  $\binom{a}{\prvar_{j}}$ if $\a(\prvar_{j}) = \mathrm{true}$ and
  takes the edge labeled $\binom{a}{*}$ otherwise.  This path is in
  $L(a \bind{x_{1}} a \bind{x_{2}} \cdots a \bind{x_{n}})$ and results
  in a valuation $\val$ such that $\mathrm{Range}(\val) \cap
  \set{\prvar_{1}, \ldots, \prvar_{n}}$ is precisely the set of
  propositional atoms set to true by the truth assignment $\a$. Since
  this truth assignment satisfies $\phi$, we conclude from
  Lemma~\ref{lem:BoolFormulaSchema} that the path can be continued
  from the source of $G_{\phi}$ to its sink.
\end{proof}

The gadgets we present before Lemma~\ref{lem:evalLevel1WSatHard} in
the main paper build on earlier ideas and bring out the difference in
the roles played by the data graph and the query, when reducing
satisfiability to query evaluation. We begin with an observation about
the REWB $e_{\eval}[k]$.

\newtheorem*{definition3}{Definition B.3}
\begin{definition3}[Indistinguishable variables]
    \label{def:indistinguishableVars}
    The variables $x_{1}, \ldots, x_{k}$ are said to be
    indistinguishable in an REWB $e$ if they are free in $e$ and for
    every condition $c$ appearing in $e$, for every data value $d$ and
    every pair of valuations $\val$ and $\val'$ with
    $\set{\val(x_{1}), \ldots, \val(x_{k})} = \set{\val'(x_{1}),
    \ldots, \val'(x_{k})}$, we have $d,\val \models c$ iff $d,\val'
    \models c$.
\end{definition3}
The variables $x_{1}, \ldots, x_{k}$ are indistinguishable in
$e_{\eval}[k]$. The intuition is that $e_{\eval}[k]$ treats the set
$\set{\val(x_{1}), \ldots, \val(x_{k})}$ as the set of propositional
atoms that are set to true. Any valuation $\val'$ with
$\set{\val'(x_{1}), \ldots, \val'(x_{k})} = \set{\val(x_{1}), \ldots,
\val(x_{k})}$ will have the same meaning, as far as $e_{\eval}[k]$ is
concerned.

Suppose $\prvars=\set{\prvar_{1}, \ldots, \prvar_{n}}$ is a set of
propositional atoms and $x_{1}, \ldots, x_{k}$ are
variables indistinguishable in some REWB $e$. We would like to check
if the source of some graph $G$ is connected to its sink by a data path
in the language of $e$, for some injective valuation $\val:
\set{x_{1}, \ldots, x_{k}} \to \prvars$.
The data graph $G[\exists k/\prvars]\circ G$ and the expression
$e[\exists k]\circ e$ defined in the main paper have been designed to
achieve this.

Suppose $\val$ is a valuation of some variables, whose domain does not
intersect with $\set{x_{1}, \ldots, x_{k}}$. We denote by
$\val[\set{x_{1}, \ldots, x_{k}} \xra{1:1} \prvars]$ the set of
valuations $\val'$ that extend $\val$ such that
$\mathrm{domain}(\val') = \mathrm{domain}(\val) \cup \set{x_{1},
\ldots, x_{k}}$, $\val'$ is injective on $\set{x_{1}, \ldots, x_{k}}$
and $\set{\val'(x_{1}), \ldots, \val'(x_{k})} \subseteq \prvars$.
\newtheorem*{lemma4}{Lemma B.4}
\begin{lemma4}
    \label{lem:existsGadget}
    Suppose $x_{1}, \ldots, x_{k}$ are indistinguishable in the REWB
    $e$ and $\val$ is a valuation for $\fv(e) \setminus \set{x_{1},
    \ldots, x_{k}}$. The source of $G[\exists k/\prvars]\circ G$ is
    connected to its sink by a data path in $L(e[\exists k]\circ e,
    \val)$ iff there exists $\val' \in \val[\set{x_{1}, \ldots, x_{k}}
    \xra{1:1} \prvars]$ and a data path in $L(e, \val')$ connecting the
    source of $G$ to its sink.
\end{lemma4}
\begin{proof}
    Suppose the source of $G[\exists k/\prvars]\circ G$ is connected
    to its sink by a data path in $L(e[\exists k]\circ e,\val)$. When
    this path reaches the source of $G$, the updated valuation $\val'$
    is in $\val[\set{x_{1}, \ldots, x_{k}}
    \xra{1:1} \prvars]$. Hence, the continuation of the path from
    the source of $G$ to its sink is in $L(e,\val')$.
    
    Conversely, suppose there exists $\val' \in \val[\set{x_{1},
    \ldots, x_{k}} \xra{1:1} \prvars]$ and the source of $G$ is
    connected to its sink by a data path in $L(e, \val')$. There is a
    data path $w_{1}$ from the source of $G[\exists k/\prvars]\circ G$
    to the source of $G$ in $L(a_{1}^{*} a_{1} \bind{x_{1}} a_{1}^{*}
    a_{1}\bind{x_{2}} a_{1}^{*} \cdots a_{1}^{*} a_{1}\bind{x_{k}}
    a_{1}^{*})$, resulting in a valuation $\val'' \in \val[\set{x_{1},
    \ldots, x_{k}} \xra{1:1} \prvars]$ such that $\set{\val''(x_{1}),
    \ldots, \val''(x_{k})} = \set{\val'(x_{1}), \ldots,
    \val'(x_{k})}$. Since $x_{1}, \ldots, x_{k}$ are indistinguishable
    in $e$, we infer that the source of $G$ is connected to its sink
    by a data path $w_{2}$ in $L(e,\val'')$.  The two data paths
    $w_{1}$ and $w_{2}$ can be concatenated to get a data path in
    $L(e[\exists k]\circ e,\val)$ connecting the source of $G[\exists
    k/\prvars]\circ G$ to its sink.
\end{proof}

\Repeat{Lemma}{lem:evalLevel1WSatHard}
    Let $\phi$ be a Boolean formula over the set $\prvars$ of
    propositions and $k \in \Nat$. We can construct in
    polynomial time a data graph $G$ and an REWB $e_{1}^{1}$
    satisfying the following conditions.
    \begin{enumerate}
        \item The source of $G$ is connected to its sink by a data
            path in $L(e_{1}^{1})$ iff $\phi$ has a satisfying
            assignment of weight $k$.
        \item The size of $e_{1}^{1}$ depends only on $k$.
    \end{enumerate}
\begin{proof}
    The required data graph is $G[\exists k/\prvars]\circ G_{\phi}$
    and $e_{1}^{1}$ is $e[\exists k]\circ e_{\eval}[k]$. The
    correctness follows from Lemma~B.4 and
    Lemma~\ref{lem:BoolFormulaSchema}.
\end{proof}

\Repeat{Lemma}{lem:universalGadgetInduction}
    Let $i \in \set{1,\ldots, k}$ and $\val_{i}$ be a valuation for
    $\fv(e^{i}) \setminus \set{x_{1}, \ldots, x_{i}}$. The source of
    $G_{i}$ is connected to its sink by a data path in $L(e^{i},
    \val_{i})$ iff for every $\val \in \val_{i}[\set{x_{1}, \ldots,
    x_{i}} \to \prvars]$, there is a data path in $L(e^{0},\val)$
    connecting the source of $G_{0}$ to its sink.
\begin{proof}
    By induction on $i$. For the base case, we have $e^{1} =
    b_{1}(a_{1}\bind{x_{1}}(e^{0}a_{1}[x_{1}^{=}])c_{1})^{*}$. Suppose
    there is a data path $w \in L(e^{1}, \val_{1})$ connecting the
    source of $G_{1}$ to its sink. We see from
    Figure~\ref{fig:UniversalGadget} that $w$ has to start from the
    edge labeled $b_{1}$, followed by $\binom{a_{1}}{\prvar_{1}}$,
    assigning $\prvar_{1}$ to $x_{1}$. Then $w$ has to go from the
    source of $G_{0}$ to its sink using a sub-path in $L(e^{0},
    \val_{1}[x_{1} \to \prvar_{1}])$.  From the sink of $G_{0}$, $w$
    is forced to take the edge labeled $\binom{a_{1}}{\prvar_{1}}$, in
    order to satisfy the condition $[x_{1}^{=}]$. The next letter of
    $w$ is $c_{1}$, which leads to a node from where the only outgoing
    edge is labeled with $\binom{a_{1}}{\prvar_{2}}$. This forces $w$
    to have a sub-path from the source to the sink of $G_{0}$ in
    $L(e^{0}, \val_{1}[x_{1} \to \prvar_{2}])$. We can similarly infer
    that for every $j \in \set{1,\ldots, n}$, $w$ has sub-paths  in
    $L(e^{0},\val_{1}[x_{1} \to \prvar_{j}])$ connecting the source of
    $G_{0}$ to its sink.

    Conversely, suppose that for every $j \in \set{1, \ldots, n}$, there is
    a data path $w_{j} \in L(e^{0},\val_{1}[x_{1} \to \prvar_{j}])$
    connecting the source of $G_{0}$ to its sink. The data path $b_{1} (
    \binom{a_{1}}{\prvar_{j}}w_{j}\binom{a_{1}}{\prvar_{j}}c_{1})_{j\in
    \set{1, \ldots, n}} \in L(e^{1}, \val_{1})$ connects the source
    of $G_{1}$ to its sink.

    The induction step is similar to the base case.
\end{proof}

\Repeat{Lemma}{lem:universalGadgetCoversAll}
    Let $\val$ be a valuation for $\fv(e) \setminus \set{x_{1},
    \ldots, x_{k}}$ for some REWB $e$. The source of $G[\forall
    k/\prvars]\circ G$ is connected to its sink by a data path in
    $L(e[\forall k]\circ e, \val)$ iff for all
    $\val' \in \val[\set{x_{1}, \ldots, x_{k}} \xra{1:1} \prvars]$,
    the source of $G$ is connected to its sink by a data path in
    $L(e,\val')$.
\begin{proof}
    Suppose the source of $G[\forall
    k/\prvars]\circ G$ is connected to its sink by a data path in
    $L(e[\forall k]\circ e, \val)$. We infer
    from Lemma~\ref{lem:universalGadgetInduction} that for all
    $\val' \in \val[\set{x_{1}, \ldots, x_{k}} \to \prvars]$, there is
    a data path $w_{\val'}$ from the source of $G_{0}$ to its sink in
    $L(e^{0},\val')$. For any such $\val'$ that is injective on
    $\set{x_{1}, \ldots, x_{k}}$, $w_{\val'}$ can not have $\sk$
    edges. The reason is that $e^{0}$ enforces $\val'(x_{i}) =
    \val'(x_{j})$ for some distinct $i,j \in \set{1,\ldots, k}$ in
    order to take a $\sk$ edge, but this is not possible for $\val'$
    if it is injective on $\set{x_{1}, \ldots, x_{k}}$. Hence, for
    extensions $\val'$ that are injective on $\set{x_{1}, \ldots,
    x_{k}}$, the source of $G$ is connected to its sink by a data path
    in $L(e,\val')$.

    Conversely, suppose that for all $\val' \in \val[\set{x_{1},
    \ldots, x_{k}} \xra{1:1} \prvars]$, the source of $G$
    is connected to its sink by a data path in $L(e,\val')$. This data path
    also connects the source of $G_{0}$ to its sink, and is in
    $L(e^{0}, \val')$. For valuations $\val' \in
    \val[\set{x_{1}, \ldots, x_{k}} \to \prvars]$ that are not injective
    on $\set{x_{1}, \ldots, x_{k}}$, there is a data path in $L(e^{0},
    \val')$ connecting the source of $G_{0}$ to its sink, which uses
    one of the $\sk$ edges from the source of $G_{0}$ to its sink.
    Hence, for all $\val' \in \val[\set{x_{1}, \ldots, x_{k}} \to
    \prvars]$, there is a data path in
    $L(e^{0},\val')$ connecting the source of $G_{0}$ to its sink. We
    conclude from Lemma~\ref{lem:universalGadgetInduction} that the
    source of $G[\forall k/\prvars]\circ G$ is connected to its sink
    by a data path in $L(e[\forall k]\circ e, \val)$.
\end{proof}

\Repeat{Lemma}{lem:evalAWSatHard}
    Given an instance of the weighted quantified satisfiability
    problem, We can construct in polynomial time a data graph $G$ and an
    REWB $e_{1 + k_{2} + k_{4} +\cdots}^{1}$ satisfying the following conditions.
    \begin{enumerate}
        \item The source of $G$ is connected to its sink by a data
            path in $L(e_{1 + k_{2} + k_{4} +\cdots}^{1})$ iff the
            given instance of the weighted quantified satisfiability
            problem is a yes instance.
        \item The size of $e_{1 + k_{2} + k_{4} +\cdots}^{1}$ depends
            only on $k_{1}, \ldots, k_{\ell}$.
    \end{enumerate}
\begin{proof}
    The required data graph $G$ is $G[\exists k_{1}/\prvars_{1}] \circ
    G[\forall k_{2}/\prvars_{2}] \circ \cdots \circ G_{\phi}$ and the
    required REWB $e_{1 + k_{2} + k_{4} +\cdots}^{1}$ is $e[\exists
    k_{1}] \circ e[\forall k_{2}] \circ \cdots \circ e_{\eval}[k_{1} +
    \cdots + k_{\ell}]$. We assume that $\circ$ associates to the
    right, so $G_{1} \circ G_{2} \circ G_{3}$ is $G_{1} \circ (G_{2}
    \circ G_{3})$ and $e^{1} \circ e^{2} \circ e^{3}$ is $e^{1}
    \circ(e^{2} \circ e^{3})$. Suppose that the source of $G$ is
    connected to its sink by a data path in $L(e_{1 + k_{2} + k_{4}
    +\cdots}^{1})$.  Lemma~B.4 ensures that there
    exists an injective valuation $\val_{1}: \set{x_{1}, \ldots,
    x_{k_{1}}} \to \prvars_{1}$ such that the source of $G[\forall
    k_{2}/\prvars_{2}] \circ \cdots \circ G_{\phi}$ is connected to its
    sink by a data path in $L(e[\forall k_{2}] \circ \cdots \circ
    e_{\eval}[k_{1} + \cdots + k_{\ell}], \val_{1})$. Then
    Lemma~\ref{lem:universalGadgetCoversAll} ensures that for all
    $\val_{2} \in \val_{1}[\set{x_{k_{1}+1}, \ldots, x_{k_{1}+k_{2}}}
    \xra{1:1} \prvars_{2}]$, the source of $G[\exists k_{3}/\prvars_{3}]
    \circ \cdots \circ G_{\phi}$ is connected to its sink by a data path
    in $L(e[\exists k_{3}] \circ \cdots \circ e_{\eval}[k_{1} + \cdots
    k_{\ell}], \val_{2})$. This argument can be repeated $\ell$ times to
    cover all alternations in the Boolean quantifiers. Finally,
    Lemma~\ref{lem:BoolFormulaSchema} ensures that the truth assignments
    corresponding to the valuations all satisfy the Boolean formula
    $\phi$. The converse direction is similar.
\end{proof}

\end{document}